\let\oldnl\nl
\newcommand{\nonl}{\renewcommand{\nl}{\let\nl\oldnl}}
\theoremstyle{plain}
\newtheorem{theorem}{Theorem}
\newtheorem{lemma}{Lemma}
\newtheorem{definition}{Definition}
\newtheorem{fact}{Fact}
\newtheorem{proposition}{Proposition}
\newtheorem{claim}{Claim}
\theoremstyle{definition}
\newcommand{\Nat}{\mathbb{N}}
\newcommand{\ket}[1]{|#1\rangle}
\newcommand{\braket}[2]{\langle#1|#2\rangle}
\newcommand{\norm}[1]{\lVert #1\rVert}
\newcommand{\abs}[1]{\left| #1\right|}
\newcommand{\bra}[1]{\langle#1|}
\newcommand{\upw}{\kappa}
\newcommand{\E}{\mathbb{E}}
\newcommand{\bul}{\ast}
\newcommand{\Tt}{\mathcal{T}}
\newcommand{\qt}[1]{\mathsf{qt}(#1)}
\newcommand{\en}[1]{\mathcal{E}\!\left(#1\right)}
\newcommand{\overlapf}[2]{\Gamma_{\!#2}(#1)}
\newcommand{\len}[1]{\mathsf{len}(#1)}
\newcommand{\qs}[1]{\mathsf{qs}(#1)}
\newcommand{\ceil}[1]{\left\lceil #1 \right\rceil}
\newcommand{\ex}[1]{\E\left[ #1 \right]}
\newcommand{\var}[1]{\mathbb{V}\left[ #1 \right]}
\newcommand{\pr}[1]{\mathrm{Pr}\left[ #1 \right]}
\newcommand{\Real}{\mathbb{R}}
\newcommand{\Comp}{\mathbb{C}}
\newcommand{\QMA}{\mathsf{QMA}}
\newcommand{\BPTS}{\mathsf{BPTimeSpace}}
\newcommand{\BQP}{\mathsf{BQP}}
\newcommand{\BPP}{\mathsf{BPP}}
\newcommand{\myO}[1]{O^\ast\!\left(#1\right)}
\newcommand{\MM}{\mathsf{IMM}}
\newcommand{\IP}{\mathsf{IP}}
\newcommand{\EST}{\mathsf{PT}}
\newcommand{\EV}{\mathsf{SE}}
\newcommand{\GLH}{\mathsf{GLH}}
\newcommand{\LH}{\mathsf{LH}}
\newcommand{\Test}{\mathsf{Test}}
\providecommand{\Ss}{\mathcal{S}}
\providecommand{\Aa}{\mathcal{A}}
\providecommand{\Comp}{\mathbb{C}}
\providecommand{\Bb}{\mathcal{B}}
\newcommand{\poly}{\mathrm{poly}}
\newcommand{\spa}{\mathrm{span}}
\newcommand{\set}[2]{\{#1,\ldots,#2\}}
\begin{document}
\title{Classical Algorithms for Constant Approximation of the Ground State Energy of Local Hamiltonians}
\date{}
\author{
Fran{\c c}ois Le Gall\\
Graduate School of Mathematics\\
Nagoya University\\
legall@math.nagoya-u.ac.jp
}
\maketitle
\thispagestyle{empty}
\begin{abstract}
	We construct classical algorithms computing an approximation of the ground state energy of an arbitrary $k$-local Hamiltonian acting on $n$ qubits.
	
	We first consider the setting where a good ``guiding state'' is available, which is the main setting where quantum algorithms are expected to achieve an exponential speedup over classical methods. We show that a \emph{constant} approximation (i.e., an approximation with \emph{constant relative accuracy}) of the ground state energy can be computed classically in $\poly\left(1/\chi,n\right)$ time and $\poly(n)$ space, where~$\chi$ denotes the overlap between the guiding state and the ground state (as in prior works in dequantization, we assume sample-and-query access to the guiding state). This gives a significant improvement over the recent classical algorithm by Gharibian and Le Gall (SICOMP 2023), and matches (up to a polynomial overhead) both the time and space complexities of quantum algorithms for constant approximation of the ground state energy. We also obtain classical algorithms for higher-precision approximation.

	For the setting where no guided state is given (i.e., the standard version of the local Hamiltonian problem), we obtain a classical algorithm computing a constant approximation of the ground state energy in $2^{O(n)}$ time and $\poly(n)$ space. To our knowledge, before this work it was unknown how to classically achieve these bounds simultaneously, even for constant approximation. We also discuss complexity-theoretic aspects of our results. 
\end{abstract}
\newpage

\section{Introduction}\label{sec:intro}
\subsection{Statement of our main results}
Estimating the ground state energy of Hamiltonians is a central problem in both many-body physics and quantum complexity theory. 
Consider a $k$-local Hamiltonian 
\begin{equation}\label{eq:H}
H=
\sum_{i=1}^m H_i
\end{equation}
acting on $n$ qubits, with $k=O(1)$. Here each term $H_i$ acts non-trivially on only $k$ qubits (but does not need to obey any geometric locality). Let $\en{H}$ denote the ground state energy of $H$, i.e., its smallest eigenvalue. For any $\varepsilon>0$, we say that an estimate $\hat E$ is an $\varepsilon$-approximation of $\en{H}$ if 
\[
	\abs{\hat E-\en{H}}\le\varepsilon\sum_{i=1}^m\norm{H_i}.
\]
It is well known that computing a $1/\poly(n)$-approximation of $\en{H}$ is $\QMA$-hard, even for $k=2$ and for geometrically local Hamiltonians \cite{Cubitt+16,Kempe+06,Kitaev2002,Oliveira+08,Piddock+17}. The Quantum PCP conjecture \cite{Aharonov2013,Aharonov+02} posits that there exists a constant $\varepsilon>0$ such that computing an $\varepsilon$-approximation of $\en{H}$ is $\QMA$-hard as well.

Despite these hardness results, efficient quantum algorithms for ground state energy estimation can be constructed when a good ``guiding state'' is available, i.e., when a quantum state $\ket{\psi}$ that has a good overlap $\abs{\braket{\psi}{\psi_0}}$ with a ground state $\ket{\psi_0}$ of $H$ is given as an additional input or can be constructed easily (this problem has been called the ``guided local Hamiltonian problem'' in the recent literature \cite{Cade+ICALP23,Gharibian+SICOMP2023,Weggeman+24}). More precisely, quantum phase estimation \cite{Kitaev95,NC10} and more advanced techniques \cite{Apeldoorn+20,Ge+19,Gilyen+STOC19,Kerzner+23,Lin+20,Low+FOCS24,Martyn+21,Poulin+09} lead to the following result:
\begin{fact}\label{fact}
	Given a quantum state with overlap $\chi$ with a ground state of $H$, there exists a quantum algorithm that computes with high probability an $\varepsilon$-approximation of $\en{H}$ in $\poly\big(\frac{1}{\chi},\frac{1}{\varepsilon},n\big)$ time
	and $O\big(n+\log(\frac{1}{\varepsilon})\big)$ space.
\end{fact}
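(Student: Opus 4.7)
The plan is to combine a block encoding of $H$ (built from the locality structure) with quantum phase estimation, then read out the ground energy by running phase estimation on the guiding state $\ket{\psi}$ and amplifying the overlap with the ground state $\ket{\psi_0}$.

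First I would preprocess $H$ into a block-encoded form. Writing $\alpha=\sum_{i=1}^m \norm{H_i}$, the goal is to implement a unitary $U$ that has $H/\alpha$ in its top-left block. Because $k=O(1)$, each $H_i$ expands into at most $4^k=O(1)$ Pauli terms, so $H=\sum_j c_j P_j$ is a linear combination of at most $O(m)$ Pauli unitaries with $\sum_j\abs{c_j}\le \alpha$. The standard linear-combination-of-unitaries (LCU) construction then yields a $(1,O(\log m),0)$-block encoding of $H/\alpha$ on $n+O(\log m)=O(n)$ qubits, with each call using $\poly(n)$ time. All the numerical primitives in the sequel use only $O(n+\log(1/\varepsilon))$ qubits of workspace.

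Next I would plug this block encoding into a high-precision eigenvalue estimation routine. Using either the classical phase estimation of Kitaev or the more recent singular value / eigenvalue estimation based on quantum signal processing, one obtains an algorithm $\Ee$ that, applied to any eigenvector $\ket{\phi}$ of $H$ with eigenvalue $\lambda$, outputs an $(\varepsilon\alpha)$-approximation of $\lambda$ with high probability, using $\poly(1/\varepsilon,n)$ time and an ancilla register of size $O(\log(1/\varepsilon))$. Decomposing the guiding state as $\ket{\psi}=a\ket{\psi_0}+b\ket{\psi_0^\perp}$ with $\abs{a}\ge\chi$, running $\Ee$ on $\ket{\psi}$ produces an $(\varepsilon\alpha)$-approximation of $\en{H}$ with probability at least $\chi^2$ (up to the phase-estimation error).

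To boost this to high confidence without paying $1/\chi^2$, I would use amplitude amplification around the ``outcome lies near $\en{H}/\alpha$'' marking: apply $O(1/\chi)$ rounds of Grover-style amplification on the joint system (guiding state $\otimes$ phase register) to bring the success probability to a constant, then take a majority vote over $O(\log(1/\delta))$ trials to reach confidence $1-\delta$. Alternatively, repeat $\Ee$ on fresh copies of $\ket{\psi}$ a total of $\tilde O(1/\chi^2)$ times and return the minimum: this is cruder but already fits within the $\poly(1/\chi,1/\varepsilon,n)$ time budget. Either way one obtains the stated complexities. The main technical obstacle is twofold: (i)~verifying that only $O(n+\log(1/\varepsilon))$ qubits suffice, which requires implementing the LCU block encoding and the phase/eigenvalue estimation in a space-efficient way (this is where appealing to the quantum signal-processing formulation is cleanest); and (ii)~ensuring that the amplification step does not blow up the ancilla register, for which we use the fact that reflection about $\ket{\psi}$ requires only sample-and-access to a preparation unitary of $\ket{\psi}$ that is given as part of the guided input.
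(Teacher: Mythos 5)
Your proposal is essentially the standard argument that the paper itself does not spell out but delegates to its citations (phase estimation or QSP-based eigenvalue estimation applied to the guiding state): block-encode $H/\alpha$ with $\alpha=\sum_i\norm{H_i}$, estimate eigenvalues to precision $\varepsilon\alpha$ using an $O(\log(1/\varepsilon))$-qubit phase register, and repeat $\tilde{O}(1/\chi^2)$ times on fresh copies of $\ket{\psi}$, returning the minimum, which gives the claimed $\poly(1/\chi,1/\varepsilon,n)$ time and $O(n+\log(1/\varepsilon))$ space. The only shaky step is the amplitude-amplification variant (marking outcomes ``near $\en{H}/\alpha$'' presupposes a threshold you do not yet know, which is why the cited works that achieve $1/\chi$ scaling use binary search with eigenvalue filtering), but since the Fact only claims polynomial dependence, your cruder repeat-and-take-the-minimum fallback already suffices.
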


\noindent When $\chi=1/\poly(n)$ and $\varepsilon=1/\poly(n)$, both the running time and the space complexity (i.e., the number of bits and qubits needed for the computation) are polynomial in $n$. Even for larger values of $\chi$, the performance of this quantum algorithm can be significantly better than the performance of classical algorithms (which typically have running time exponential in $n$ --- see later for a detailed discussion).
Combined with the fact that for several important applications (e.g., quantum chemistry) good candidates for guiding states can be efficiently constructed, ground state energy estimation is one of the most promising and most anticipated applications of quantum computers (we refer to, e.g., \cite{Aaronson09, Abrams+99, Aspuru-Guzik+05, Bauer+20, Lee+21, Lee+23,Reiher+17} for discussions of these applications). 

In this work we investigate the classical complexity of this guided local Hamiltonian problem. A first issue is how to present the guiding state (which is a quantum state, i.e., an exponential-dimension vector) to a classical computer. 
As in prior works in dequantization \cite{Bakshi+SODA24,Chia2020a,Chia+JACM22,Du+20,Gharibian+SICOMP2023,Gilyen+22,Jethwani+MFCS20,LeGall23,TangSTOC19,Tang21}, we consider sample-and-query access: 
\begin{itemize}
	\item[(i)] for any $j\in[2^n]$ we can efficiently compute $\braket{j}{\psi}$\,;
	\item[(ii)] we can efficiently sample from the probability distribution $p\colon[2^n]\to[0,1]$ that outputs $j$ with probability $\abs{\braket{j}{\psi}}^2$.
\end{itemize}
The motivation for (ii), which is the central assumption in dequantized algorithms,
is as follows: since measuring the quantum state $\ket{\psi}$ in the computational basis gives a sample from the probability $p$, it is natural (or ``fair'') to assume that in the classical setting this distribution is efficiently samplable as well.

Recently, Gharibian and Le Gall \cite{Gharibian+SICOMP2023} constructed a classical algorithm computing an $\varepsilon$-approximation of $\en{H}$ in $n^{O(\log(1/\chi)/\varepsilon)}$ time by dequantizing quantum algorithms based on the Quantum Singular Value Transformation. 
Here is our main result:

\begin{theorem}[Simplified version]\label{th:main-informal}
	Given sample-and-query access to a quantum state with overlap $\chi$ with a ground state of $H$, there exists a classical algorithm that computes with high probability an $\varepsilon$-approxima\-tion of $\en{H}$ in $\poly\big(\frac{1}{\chi^{1/\varepsilon}},n\big)$ time and $\poly\big(n,\frac{1}{\varepsilon}\big)$ space.
\end{theorem}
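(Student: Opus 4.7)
The plan is to reduce the additive $\varepsilon$-approximation of $\en{H}$ to estimating $\braket{\psi}{p(H)\psi}$ for a real polynomial $p$ of moderate degree, and to compute that expectation by a classical Monte Carlo procedure exploiting both the $k$-locality of $H$ and the sample-and-query access to $\ket\psi$. First I rescale so that $\sum_i\norm{H_i}=1$, placing the spectrum of $H$ in $[-1,1]$, and binary-search over the $O(1/\varepsilon)$ candidate thresholds $\lambda$, reducing to the decision problem ``$\en{H}\le\lambda$ vs.\ $\en{H}\ge\lambda+\varepsilon$''. For each $\lambda$ I use a shifted-error-function-type polynomial $p_\lambda$ of degree $d=O(\log(1/\chi)/\varepsilon)$, nonnegative on $[-1,1]$ (by squaring if necessary), with $p_\lambda(x)\ge 1-\delta$ on $[-1,\lambda]$ and $p_\lambda(x)\le\delta$ on $[\lambda+\varepsilon,1]$ for $\delta=\Theta(\chi^2)$. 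Expanding $\ket\psi$ in the eigenbasis of $H$, the two cases are separated by $\Omega(\chi^2)$ in the value of $\braket{\psi}{p_\lambda(H)\psi}$, so an additive estimate to accuracy $\Theta(\chi^2)$ decides the case.

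Writing $p_\lambda(x)=\sum_{t=0}^d c_t\,x^t$ and $m_t=\braket{\psi}{H^t\psi}$, we have $\braket{\psi}{p_\lambda(H)\psi}=\sum_t c_t m_t$, and the key bookkeeping item is $B:=\sum_t|c_t|=2^{O(d)}=(1/\chi)^{O(1/\varepsilon)}$ (the standard monomial $L^1$ bound for a degree-$d$ polynomial bounded by $O(1)$ on $[-1,1]$). To estimate $m_t=\sum_{j_1,\dots,j_t}\braket{\psi}{H_{j_1}\cdots H_{j_t}\psi}$, I use the unbiased estimator that samples $(j_1,\dots,j_t)\sim q$ with $q(j)=\prod_s\norm{H_{j_s}}/\sum_i\norm{H_i}$ and $x_0\sim|\braket{x_0}{\psi}|^2$, returning
\[
	\hat M \;=\; \frac{\bigl(H_{j_1}\cdots H_{j_t}\ket\psi\bigr)_{x_0}}{q(j)\,\braket{x_0}{\psi}}.
\]
Unbiasedness comes from $\E_{x\sim|\psi|^2}[(A\ket\psi)_x/\braket{x}{\psi}]=\braket{\psi}{A\psi}$; the second moment from
\[
	\E_{x\sim|\psi|^2}\!\left[\frac{|(A\ket\psi)_x|^2}{|\braket{x}{\psi}|^2}\right]=\norm{A\ket\psi}^2\le\prod_s\norm{H_{j_s}}^2.
\]
Summing against $1/q(j)$ yields $\E[|\hat M|^2]\le(\sum_i\norm{H_i})^{2t}=1$, so $O(1/\eta^2)$ independent samples estimate $m_t$ to additive error $\eta$, with the variance completely decoupled from $\ket\psi$ and from the spectrum of $H$.

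Each sample is evaluated by enumerating the at most $(2^k)^t$ paths $x_0\to x_1\to\cdots\to x_t$ compatible with $(j_1,\dots,j_t)$---each step changes only the $k$ qubits in the support of one $H_{j_s}$---multiplying the matrix elements, and making one sample-and-query call per leaf $x_t$; this costs $2^{O(d)}\poly(n)=(1/\chi)^{O(1/\varepsilon)}\poly(n)$ per sample. To combine the moments into $\hat E=\sum_t c_t\hat m_t$ with error $\Theta(\chi^2)$, I estimate each $m_t$ to error $\Theta(\chi^2)/B$, which needs $O(B^2/\chi^4)=(1/\chi)^{O(1/\varepsilon)}$ samples per moment, with median-of-means boosting the confidence. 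Aggregating over the $d+1$ moments and the $O(1/\varepsilon)$ binary-search thresholds gives total time $(1/\chi)^{O(1/\varepsilon)}\poly(n)=\poly(1/\chi^{1/\varepsilon},n)$. Samples are drawn and accumulated online, and a single sample uses only $O(nd+dk+d\log m)$ bits of workspace, so the total space is $\poly(n,1/\varepsilon)$, even in the worst case $\chi=2^{-n}$ where $d=O(n/\varepsilon)$.

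The main obstacle is the variance control in the second paragraph. A naive path estimator has variance scaling like $1/|\braket{x_0}{\psi}|^2$ at small amplitudes and cannot be bounded in terms of $\sum_i\norm{H_i}$ alone; without importance sampling over $(j_1,\dots,j_t)$ the Hilbert-space sum is also out of control. The particular product estimator $\hat M$ and the identity $\E_{x\sim|\psi|^2}[|(A\ket\psi)_x/\braket{x}{\psi}|^2]=\norm{A\ket\psi}^2$ together make the variance depend only on $\norm{A}$, decoupling it from both $\ket\psi$ and the spectrum of $H$. This is what allows us to keep essentially the same polynomial degree $O(\log(1/\chi)/\varepsilon)$ as in \cite{Gharibian+SICOMP2023} but to evaluate $\braket{\psi}{p_\lambda(H)\psi}$ by sampling in $(1/\chi)^{O(1/\varepsilon)}\poly(n)$ time, rather than by a deterministic path enumeration in $n^{O(\log(1/\chi)/\varepsilon)}$ time.
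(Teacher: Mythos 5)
Your proposal is correct and takes essentially the same approach as the paper: a threshold (rectangle-type) polynomial of degree $O(\log(1/\chi)/\varepsilon)$ with the $2^{O(d)}$ coefficient bound, estimation of the moments $\bra{\psi}H^t\ket{\psi}$ by importance-sampling the terms $H_{j_1},\ldots,H_{j_t}$ proportionally to products of their norms, sparse path enumeration to evaluate entries of $H_{j_1}\cdots H_{j_t}\ket{\psi}$ in $2^{O(kt)}\poly(n)$ time, and amplitude sampling of $\ket{\psi}$ for the inner product, swept over $O(1/\varepsilon)$ thresholds in $\poly(n,1/\varepsilon)$ space. The only cosmetic difference is that you fuse the term-sampling and the amplitude-sampling into a single exactly unbiased estimator, which spares you the paper's bias analysis for its inner estimate of $\bra{\psi}H_{j_1}\cdots H_{j_t}\ket{\psi}$; the variance bound is the same.
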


Our result significantly improves the running time of the algorithm from \cite{Gharibian+SICOMP2023}. 
For instance, if $\chi=\Omega(1)$, i.e., if we have a good guiding state, our algorithm has time complexity $\poly(2^{1/\varepsilon},n)$ instead of $n^{O(1/\varepsilon)}$ in \cite{Gharibian+SICOMP2023}. If $\varepsilon=\Omega(1)$, i.e., if we want only constant precision, our algorithm has time complexity $\poly(1/\chi,n)$ instead of $n^{O(\log(1/\chi))}$ in \cite{Gharibian+SICOMP2023}. 
Additionally, our approach only uses polynomial space. Comparing \cref{th:main-informal} with the bounds of Fact \ref{fact} shows that for \emph{constant} precision, there exist a classical algorithm matching (up to a polynomial overhead) the performance of quantum algorithms. 

Using the same technique, we also obtain the following result for the case where no guided state is given (i.e., the standard version of the local Hamiltonian problem):

\begin{theorem}[Simplified version]\label{cor:main-informal}
	For any constant $\varepsilon>0$, there exists a classical algorithm that computes with high probability an $\varepsilon$-approximation of $\en{H}$ in $2^{O(n)}$ time and $\poly(n)$ space.
\end{theorem}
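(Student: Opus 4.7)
The plan is to reduce \cref{cor:main-informal} to \cref{th:main-informal} by brute-force enumeration of the computational basis states used as candidate guiding states. Let $\ket{\psi_0}$ be a ground state of $H$. Since $\sum_{j=0}^{2^n-1}|\braket{j}{\psi_0}|^2=1$, there exists some index $j^\star$ with $|\braket{j^\star}{\psi_0}|\ge 2^{-n/2}$. Sample-and-query access to the basis state $\ket{j^\star}$ is trivial: the query oracle returns the indicator of $j=j^\star$, and the sampling oracle deterministically outputs $j^\star$. Thus \cref{th:main-informal} applied to this guiding state with overlap parameter $\chi=2^{-n/2}$ returns with high probability an $\varepsilon$-approximation of $\en{H}$ in time $\poly\bigl((2^{n/2})^{1/\varepsilon},n\bigr)=2^{O(n)}$ and space $\poly(n,1/\varepsilon)=\poly(n)$ for any constant $\varepsilon>0$.

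Since $j^\star$ is not known, the algorithm iterates \cref{th:main-informal} over all $2^n$ basis states $\ket{j}$ with the fixed parameter $\chi=2^{-n/2}$, collects the outputs $\hat E_j$, and returns $\hat E=\min_j \hat E_j$. Boosting the per-trial success probability to $1-2^{-\Omega(n)}$ costs only a $\poly(n)$ overhead, so by a union bound all $2^n$ invocations are simultaneously reliable. The total running time is $2^n\cdot 2^{O(n)}=2^{O(n)}$, and because the $\poly(n)$ scratch space of each invocation can be recycled, the overall space remains $\poly(n)$.

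The main obstacle is correctness of taking the minimum, which requires the algorithm underlying \cref{th:main-informal} to have effectively one-sided error: for \emph{any} guiding state $\ket{\psi}$, irrespective of its overlap with $\ket{\psi_0}$, the returned value should satisfy $\hat E\ge \en{H}-\varepsilon\sum_{i=1}^m\norm{H_i}$ with high probability. This property is expected to hold because the dequantized estimator is naturally of the form $\bra{\psi}f(H)\ket{\psi}/\bra{\psi}g(H)\ket{\psi}$ for polynomials $f,g$ concentrating on low-energy eigenstates, making its value close to a convex combination of eigenvalues of $H$ and hence at least $\en{H}$; should the raw algorithm fail to be one-sided out of the box, composing it with a standard clamp (replacing values below $\en{H}-\varepsilon\sum_i\norm{H_i}$ by, say, the trivial upper bound $\sum_i\norm{H_i}$, via cross-checking against binary-searched energy thresholds) suffices. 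Given one-sidedness, the choice $j=j^\star$ yields $\hat E_{j^\star}\le \en{H}+\varepsilon\sum_i\norm{H_i}$, so the minimum $\hat E$ lies in the interval $[\en{H}-\varepsilon\sum_i\norm{H_i},\;\en{H}+\varepsilon\sum_i\norm{H_i}]$, which is precisely the required $\varepsilon$-approximation.
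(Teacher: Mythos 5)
Your reduction is genuinely different from the paper's, but it has a gap at exactly the step you flag as the ``main obstacle''. \cref{th:main-informal} is a promise statement: it guarantees nothing about the output when the supplied state's overlap is below $\chi$. For the $2^n-1$ basis states $\ket{j}$ with $j\neq j^\star$ that hypothesis is in general violated, so nothing in the theorem's statement prevents $\hat E_j$ from landing far below $\en{H}$, and taking the minimum is then unsound. Your justification of one-sidedness misdescribes the algorithm: the estimator is not a ratio $\bra{\psi}f(H)\ket{\psi}/\bra{\psi}g(H)\ket{\psi}$; the paper's procedure (proof of \cref{th:EV}) scans thresholds $t$, runs $\Test(t)$, which compares an estimate of $\bra{\psi}P(A')\ket{\psi}$ for a filter polynomial $P$ against $\chi^2/2$, and returns the smallest accepted threshold. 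Your fallback ``clamp'' is circular: it discards values below $\en{H}-\varepsilon\sum_i\norm{H_i}$, but $\en{H}$ is precisely the unknown quantity, and the ``cross-checking against binary-searched energy thresholds'' is not specified in a way that could certify a threshold without already having a good guiding state. The gap is repairable, but only by a white-box argument: case (b) in the proof of \cref{claim} shows that $\Test(t)$ outputs ``no'' with high probability whenever $\en{A}\ge(t+1)\frac{\varepsilon}{2}\kappa-\kappa$, \emph{independently} of the guiding state's overlap (the overlap is used only to force a ``yes'' in case (a)); hence the algorithm never underestimates by more than $\varepsilon\kappa$, and your minimum-over-$j$ scheme (with per-run failure probability boosted below $2^{-n}$, and a convention for runs in which no test accepts) could then be made rigorous.

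For comparison, the paper avoids the issue entirely with a single invocation: it applies \cref{th:main-informal} to $H'=H\otimes I$ on $2n$ qubits with the maximally entangled guiding state $\ket{\Phi}=\frac{1}{\sqrt{2^n}}\sum_{i}\ket{i}\ket{i}$, which provably has overlap at least $2^{-n/2}$ with the ground space of $H'$ regardless of what the ground states of $H$ look like. This keeps \cref{th:main-informal} a black box, needs no enumeration, no union bound over $2^n$ runs, and no one-sided-error analysis; your approach, if completed as above, buys nothing over it and costs an extra excursion into the internals of the algorithm.
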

To our knowledge, before this work it was unknown how to achieve simultaneously running time $2^{O(n)}$ and space complexity $\poly(n)$ for ground state energy estimation of arbitrary local Hamiltonians (even for constant $\varepsilon$). We will further discuss the implications of our results in Section \ref{sub:impl} after reviewing known classical approaches for ground state energy estimation in the next subsection.

\subsection{Background on classical approaches for ground state energy estimation}\label{sub:rev}
There are two main classical approaches for estimating the ground state energy of a local Hamiltonian:
\begin{itemize}
	\item The power method or its variant the Lanczos method \cite{KW92,Lanczos50}, which estimates the ground state using matrix-vector multiplications. Since the Hamiltonian is a (sparse) matrix of dimension $2^n$, the time complexity is $\myO{2^n}$.\footnote{In this paper the notation $\myO{\cdot}$ suppresses the $\poly(n)$ factors.} There are two main issues with this approach. First, it requires storing explicit vectors in memory, which leads to space complexity $\Omega(2^n)$ and significantly reduces its applicability. Second, it is unclear how a guiding state would help significantly reduce the time complexity (having a good guiding state does reduce the number of iterations, but each iteration still requires matrix-vector multiplications of matrices and vectors of dimension $2^n$).
	\item Quantum Monte Carlo methods, which use sampling arguments to estimate the ground state without having to store explicitly the quantum state. This approach is especially useful for ``stoquastic'' Hamiltonians, i.e., Hamiltonians for which all the off-diagonal elements are real and non-positive, and has lead to the design of classical algorithms as well as 
	as complexity-theoretic investigations of the complexity of the local Hamiltonian problem for stochastic Hamiltonians \cite{Bravyi+08,Bravyi+10,Bravyi15,Liu2021}. While some of these techniques have been extended to a few classes of non-stoquastic local Hamiltonians, such as gapped local Hamiltonians \cite{Bravyi+23} or arbitrary Hamiltonians with succinct ground state \cite{Jiang25}, for ground state energy estimation the ``sign-problem'' significantly limits its applications to arbitrary local Hamiltonians \cite{Hangleiter+20,Troyer+05}. It is also unclear how the guiding state would help reduce the time complexity.
\end{itemize}

A third approach 
is direct classical simulation of the quantum circuit used in Fact \ref{fact}. 
There are several techniques for simulating quantum circuits on a classical computer. If the circuit acts on~$n$ qubits and has $m$ gates, the Schr\"odinger method stores the entire state vector in memory and performs successive matrix-vector multiplications, using roughly $m2^n$ time and $2^n$ space. While the space complexity can in several cases be significantly reduced using matrix product states or more general tensor networks \cite{Biamonte+17}, those representations also require exponential space in the worst case. On the other hand, the Feynman method calculates an amplitude as a sum of terms, using roughly $4^{m}$ time and $m+n$ space (this approach was used by Bernstein and Vazirani \cite{Bernstein+97} to prove the inclusion $\BQP\subseteq\mathsf{P}^{\#\mathsf{P}}$). Aaronson and Chen \cite{Aaronson+CCC17} have introduced a recursive version of the Feynman method, inspired by the proof of Savitch theorem \cite{Savitch70}, that works in $2^{O(n\log m)}$ time and $\poly(n,m)$ space.\vspace{-2mm}

\paragraph{Other prior works.}
Several ground state energy estimation classical algorithms have also been developed for special classes of Hamiltonians, such as one-dimensional gapped local Hamiltonians \cite{Landau+15}, quantum analogues of Max Cut \cite{Anshu+TQC20,Hallgren+20,King23,Lee+ICALP24,Parekh+ICALP21}, or Hamiltonians defined on structured graphs \cite{Bansal+09,Bergamaschi23,Brandao+STOC13}. Additionally, there are a few works \cite{Bravyi+19,Gharibian+SICOMP12,Harrow+17} achieving weaker (but still nontrivial) approximation ratios of the ground state energy, and a recent work \cite{Buhrman+PRL25} achieving a constant approximation ratio for any local Hamiltonian in time slightly better than $2^n$ (but not space-efficiently). 

\subsection{Implication of our results}\label{sub:impl}
We now discuss several implications of our results.\vspace{-2mm}

\paragraph{Better understanding of the quantum advantage.}
As already mentioned, Theorem \ref{th:main-informal} implies that for any constant precision parameter~$\varepsilon$, we can construct classical ground state energy estimation algorithms with performance matching (up to a polynomial overhead) the performance of the best known quantum algorithms. While Ref.~\cite{Gharibian+SICOMP2023} already showed this for $\chi=O(1)$, \cref{th:main-informal} proves this result for any value~$\chi\in(0,1]$. This implies that (under the assumption that current quantum algorithms are optimal) there is no superpolynomial quantum advantage for the constant-precision guided local Hamiltonian problem and gives another strong evidence that exponential quantum advantage for ground state energy estimation (and applications to, e.g., quantum chemistry) comes from the improved precision achievable in the quantum setting.\vspace{-2mm}

\paragraph{Space-efficient classical ground state energy estimation algorithms.} The second main contribution of this work is the design of space-efficient algorithms for ground state energy estimation. In particular, for the case where no guiding state is available, \cref{cor:main-informal} gives a $2^{O(n)}$-time $\poly(n)$-space classical algorithm. As already mentioned, to our knowledge
before this work it was unknown how to achieve simultaneously running time $2^{O(n)}$ time and space complexity $\poly(n)$ for arbitrary local Hamiltonians: even for constant $\varepsilon$, the best running time was $2^{O(n\log n)}$ by the approach by Aaronson and Chen~\cite{Aaronson+CCC17}. \vspace{-2mm}

\paragraph{Potentially practical classical ground state energy estimation algorithms.}

From a practical perspective, the potential of \cref{th:main-informal} is even more striking. 
In particular, for $\chi=\Omega(1)$, i.e., when we have access to a fairly good guided state (which is the case in some applications to quantum chemistry), we obtain a  $\poly(2^{1/\varepsilon},n)$-time $\poly(n)$-space classical algorithm. Note that the running time is polynomial even for $\varepsilon=1/\log n$.
Additionally, the running time is better than~$2^n$, which is the typically running time of other classical methods, even for precision as low as $\varepsilon=c/n$, for a small enough constant $c>0$. While evaluating the practicality of our algorithm is beyond the scope of this paper, we hope our algorithms find applications in many-body physics.\vspace{-2mm}

\paragraph{Complexity-theoretic implications.}
In order to formally discuss complexity theoretic aspects of our results and their relations with standard complexity classes as $\BPP$, $\BQP$ and $\QMA$, we first need to introduce decision versions of our problems. As standard in Hamiltonian complexity theory, we add the promise that either (i) $\en{H}\le a$ or (ii) $\en{H}>b$ holds, for some values $a,b\in[0,1]$ such that $b-a>\varepsilon$, and ask to decide which of (i) or (ii) holds. This leads to the following (standard) decision version of the local Hamiltonian problem:
\begin{center}
	\ovalbox{
	\begin{minipage}{12.5 cm} \vspace{2mm}
	
	\noindent\hspace{3mm}$\LH(\varepsilon)$\hspace{5mm}{\tt(Local Hamiltonian problem } ---\:  {\tt  decision version)}\\\vspace{-3mm}
	
	\noindent\hspace{3mm} Input: $\bul$ a $O(1)$-local Hamiltonian $H$ as in \cref{eq:H} acting on $n$ qubits

	\noindent\hspace{15mm}
	$\bul$ two numbers $a,b\in[0,1]$ such that $b-a>\varepsilon$
	\vspace{2mm}

	\noindent\hspace{3mm} Promise: either (i) $\en{H}\le a$ or (ii) $\en{H}>b$ holds
	
	\vspace{2mm}
	\noindent\hspace{3mm} Goal: decide which of (i) or (ii) holds
	\vspace{2mm}
	\end{minipage}
	}\vspace{2mm}
	\end{center}

As already mentioned, the problem $\LH(\varepsilon)$ is $\QMA$-complete for $\varepsilon=1/\poly(n)$. On the other hand, for any $\varepsilon=O(1)$ \cref{cor:main-informal} leads to the inclusion
\[
	\LH(\varepsilon)\in\BPTS\left(2^{O(n)},\:\poly(n)\right), 
\]
where $\BPTS(t(n),s(n))$
denotes the class of (promise) decision problems that can be solved with probability at least $2/3$ by a probabilistic Turing machine running in $t(n)$ time and using $s(n)$ space. 

For the guided local Hamiltonian problem, another subtle issue is how to access the guiding state $\ket{\psi}$. So far we have assumed 
sample-and-query access to $\ket{\psi}$ when considering classical algorithms. While satisfactory when discussing algorithmic aspects of the problem (as we did so far), such ``oracle'' access to the input is problematic if we want to discuss relations with standard complexity classes such as $\BPP$, $\BQP$ and $\QMA$. Instead, we make the following assumptions: in the quantum setting, the description of a quantum polynomial-size circuit creating $\ket{\psi}$ is given as input; in the classical setting, the description of a classical polynomial-size circuit implementing sample-and-query access to $\ket{\psi}$ is given as input. This leads to the following decision version of the guided local Hamiltonian problem:

\begin{center}
	\ovalbox{
	\begin{minipage}{14.5 cm} \vspace{2mm}
	
	\noindent\hspace{3mm}$\GLH(\varepsilon,\chi)$\hspace{5mm}{\tt(Guided Local Hamiltonian problem } ---\:  {\tt  decision version)}\\\vspace{-3mm}
	
	\noindent\hspace{3mm} Input: $\bul$ a $O(1)$-local Hamiltonian $H$ as in \cref{eq:H} acting on $n$ qubits
	
	\noindent\hspace{15mm}
	$\bul$ the description of a $\poly(n)$-size circuit implementing access to a quantum  
	
	\hspace{25mm}
	state $\ket{\psi}$
	with overlap at least $\chi$ with the ground state of $H$

	\noindent\hspace{15mm}
	$\bul$ two numbers $a,b\in[0,1]$ such that $b-a>\varepsilon$
	\vspace{2mm}

	\noindent\hspace{3mm} Promise: either (i) $\en{H}\le a$ or (ii) $\en{H}>b$ holds
	
	\vspace{2mm}
	\noindent\hspace{3mm} Goal: decide which of (i) or (ii) holds
	\vspace{2mm}
	\end{minipage}
	}\vspace{2mm}
\end{center}

Prior results on the hardness of the guided local Hamiltonian problem~\cite{Cade+ICALP23,Gharibian+SICOMP2023} combined with \cref{fact} imply that $\GLH(\varepsilon,\chi)$ is $\BQP$-complete for $\varepsilon=1/\poly(n)$ and constant $\chi$. Ref.~\cite{Gharibian+SICOMP2023} also showed that this problem is in the class $\BPP$ for constant $\varepsilon$ and constant $\chi$. \cref{th:main-informal} enables us to strengthen this result and show that for constant $\varepsilon$, the inclusion
$
	\GLH(\varepsilon,\chi)\in\BPP
$
holds for $\chi=1/\poly(n)$ as well.

These complexity-theoretic implications are summarized in \cref{table:comp}.\vspace{-2mm}

\begin{table}[h]
	\centering
	\caption{The complexity of the problems $\LH(\varepsilon)$ and $\GLH(\varepsilon,\chi)$.} 
	\renewcommand{\arraystretch}{1.4}
	\label{table:comp}
	\begin{tabular}{ c|c||c|c| } 
		\cline{2-4}
		& $\LH(\varepsilon)$& \multicolumn{1}{c|}{$\GLH(\varepsilon,\Theta(1))$} & $\GLH\big(\varepsilon,\frac{1}{\poly(n)}\big)$
			\bigstrut\\ 
		\hline
		\multicolumn{1}{|l|}{$\varepsilon=\Theta(1)$} &
		in $\BPTS(2^{O(n)},\poly(n))$\:\:(Th.~\ref{cor:main-informal})& in $\BPP$\:\:(Ref.~\cite{Gharibian+SICOMP2023})& in $\BPP$\:\:(Th.~\ref{th:main-informal})
		\\\hline 
		\multicolumn{1}{|l|}{$\varepsilon=\frac{1}{\poly(n)}$}\bigstrut &$\QMA$-complete\:\:(Refs.~\cite{Kempe+06,Kitaev2002})& \multicolumn{2}{c|}{$\BQP$-complete\:\:(Refs.~\cite{Cade+ICALP23,Gharibian+SICOMP2023})}\\
		\hline
	\end{tabular}
\end{table}

\subsection{Technical overview}
We now give a technical overview of our results. We first describe the concept of sparse decomposition of matrices. We then present the three main techniques of our proof and explain how to combine them: eigenvalue estimation via polynomial transformations, iterated matrix multiplication via sample-and-query access and, finally, implementation of the polynomial transformation (our main contribution). \vspace{-2mm}

\paragraph{Sparse decomposition of matrices.}
For an integer $s\ge 0$, we say that a matrix is $s$-sparse if each row contains only at most $s$ nonzero entries.\footnote{In the literature, it is often required that each row and each column contains only $s$ nonzero entries. In this paper we only need the sparsity condition for rows.} 

In this work we introduce the concept of \emph{$(s,\upw)$-decomposition}. 
For a matrix $A$, an integer $s\ge 0$ and a real number $\upw\ge 0$, an $(s,\upw)$-decomposition of $A$ is a decomposition 
\[
	A=\sum_{i=1}^m A_i  
	\hspace{3mm}\textrm{with}\hspace{2mm}
	\sum_{i=1}^m \norm{A_i}\le \upw
\]
in which $A_i$ is an $s$-sparse matrix for each $i\in[m]$ (the complete definition is given in Section~\ref{sec:prelim}).
We develop algorithms for estimating the smallest eigenvalue of normal matrices with an $(s,\upw)$-decomposi\-tion. The parameter $s$ will control the complexity of the algorithm, while the parameter $\upw$ will control the accuracy of the algorithm (i.e., the precision of the estimator). 

Note that for a $k$-local Hamiltonian, the decomposition of \cref{eq:H} is precisely an $(s,\kappa)$-decompo\-sition with $s=2^k$ and $\upw=\sum_{i=1}^m\norm{H_i}$. Theorems \ref{th:main-informal} and \ref{cor:main-informal} will be obtained as corollaries of similar but more general result for arbitrary normal matrices with an $(s,\upw)$-decomposition. \vspace{-2mm}

\paragraph{Eigenvalue estimation via polynomial transformation.}
We will estimate the smallest eigenvalue using polynomial transformations. This is the same approach as the one used by quantum algorithms based on the Quantum Singular Value Transform for eigenvalue filtering (see for instance \cite[Appendix B.8]{Martyn+21} for a good overview) and their dequantized version~\cite{Gharibian+SICOMP2023}.

Consider a normal matrix $A\in \Comp^{2^n\times 2^n}$ that has an $(s,\upw)$-decomposition. In this overview we assume that $\upw=1$ (the general case can be reduced to this case by renormalizing the matrix). We write this decomposition as
\begin{equation}\label{eq:A}
	A=\sum_{i=1}^m A_i
	\hspace{3mm}\textrm{with}\hspace{2mm}
	\sum_{i=1}^m\norm{A_i} = 1,
\end{equation}
where each $A_i$ is an $s$-sparse matrix
(here we assume for simplicity that $\sum_{i=1}^m\norm{A_i} = 1$ instead of $\sum_{i=1}^m\norm{A_i} \le 1$). Since $A$ is normal, it can be written as 
\[
	A=\sum_{i=1}^{2^n}\lambda_i\ket{u_i}\bra{u_i},
\]
where
$\lambda_1\le \lambda_2\le\cdots\le\lambda_{2^n}$ are the eigenvalues of $A$ and $\ket{u_1},\ldots,\ket{u_{2^n}}$ are corresponding unit-norm eigenvectors. The smallest eigenvalue of $A$ is $\en{A}=\lambda_1$.
Since $\norm{A}\le 1$, we have $\abs{\lambda_i}\le 1$ for all $i\in[2^n]$. In this overview we will assume for simplicity that $\en{A}\ge 0$.

Estimating the smallest eigenvalue reduces (by binary search) to the following decision version: under the promise that either (i) $\en{A}\le a$ or (ii) $\en{A}>b$ holds, for some values $a,b\in[0,1]$ such that $b-a>\varepsilon$, decide which of (i) or (ii) holds. For simplicity, in this overview we assume that in case~(i) we have only one eigenvalue smaller than or equal to $a$, i.e., $\lambda_i>b$ for all $i\in\set{2}{2^n}$. 

The idea is to take a (low-degree) polynomial $P\in\Real[x]$ such that 
\begin{equation}\label{eq:Pintro}
	\begin{cases}
	P(x)\approx 1& \textrm{ if } x\in[-1,a],\\
	P(x)\approx 0& \textrm{ if } x\in[b,1].\\
	\end{cases}
\end{equation}
This is done by choosing a low-degree polynomial that approximates well the ``rectangle'' function.
From \cref{eq:Pintro} we have
\[
	P(A)=\sum_{i=1}^{2^n}P(\lambda_i)\ket{u_i}\bra{u_i}
	\approx
	\begin{cases}
		\ket{u_1}\bra{u_1}& \textrm{ if } \en{A}\in[0,a],\\
		0& \textrm{ if } \en{A}\in[b,1].\\
		\end{cases}
\]
For any state $\ket{\psi}$, this implies 
\[
\bra{\psi}P(A)\ket{\psi}
	\approx
	\begin{cases}
		\chi^2& \textrm{ if } \en{A}\in[0,a],\\
		0& \textrm{ if } \en{A}\in[b,1],\\
		\end{cases}
\]
where $\chi=\abs{\braket{u_1}{\psi}}$ is the overlap between $\ket{\psi}$ and the ground state $\ket{u_1}$. This means that to decide which of (i) or (ii) holds, it is enough to estimate the quantity $\bra{\psi}P(A)\ket{\psi}$.\vspace{-2mm}
\paragraph{Iterated matrix multiplication using sample-and-query access.}
Before explaining how we estimate $\bra{\psi}P(A)\ket{\psi}$, we discuss a related problem: estimating $\bra{\psi}B_1\cdots B_r\ket{\psi}$ for arbitrary $s$-sparse matrices $B_1,\ldots,B_r$. Using techniques from prior works on dequantization \cite{Chia+JACM22,LeGall23,TangSTOC19}, a good estimate of $\bra{\psi}B_1\cdots B_r\ket{\psi}$ can be efficiently computed given sample-and-query access to $\ket{\psi}$ and (efficient) query access to the vector $B_1\cdots B_r\ket{\psi}$. 

To efficiently implement query access to $B_{1}\cdots B_{r}\ket{\psi}$, we give a space-efficient version of an approach for iterated matrix multiplication used in \cite{Gharibian+SICOMP2023,Schwarz2013}. This approach is based on the following idea: to obtain the $\ell$-th entry of $B_{1}\cdots B_{r}\ket{\psi}$, we only need to know the $s$ nonzero entries of the $\ell$-th row of $B_1$, which can be queried directly, together with the corresponding entries in the vector $B_{2}\cdots B_{r}\ket{\varphi}$, which can be computed recursively. We show in Section \ref{sec:IMM} that the running time of this approach is $\myO{s^r}$ and its space complexity $\poly(n)$ when $r\le \poly(n)$.\vspace{-2mm}

\paragraph{Implementation of the polynomial transformation.}
In order to estimate $\bra{\psi}P(A)\ket{\psi}$, our key idea is to exploit the decomposition of $A$ as a sum of sparse matrices (\cref{eq:A}). 

Write $P(x)=a_0+a_1x+\cdots a_dx^d$. Since 
\[
	\bra{\psi}P(A)\ket{\psi}=\sum_{r=0}^d a_r \bra{\psi}A^r\ket{\psi},
\]
it is enough to compute $\bra{\psi}A^r\ket{\psi}$, for each $r\in\set{0}{d}$, with enough precision (the required precision depends on the coefficient~$a_r$ and the degree $d$). 

Our core technical contribution shows how to efficiently estimate $\bra{\psi}A^r\ket{\psi}$. The approach from \cite{Gharibian+SICOMP2023} directly estimated this quantity by using iterated matrix multiplication via sample-and-query access with 
\begin{equation}\label{eq:first}
	B_1=\cdots=B_r=A.
\end{equation}
Instead, we use the decomposition of $A$ to write 
\[
	\bra{\psi}A^r\ket{\psi}=\bra{\psi}\left(\sum_{i=1}^m A_i\right)^r\ket{\psi}=
	\sum_{x\in[m]^r}\bra{\psi}A_{x_1}\cdots A_{x_r}\ket{\psi},
\]
and show how to estimate this quantity
by sampling $x$ from the set $[m]^r$ according to an appropriate probability distribution and 
then estimating $\bra{\psi}A_{x_1}\cdots A_{x_r}\ket{\psi}$ by using iterated matrix multiplication via sample-and-query access with 
\begin{equation}\label{eq:second}
	B_1=A_{x_1},\ldots, B_r=A_{x_r}.
\end{equation}
The crucial point is that the matrices in \cref{eq:second} are significantly sparser than the matrices in \cref{eq:first}, which leads to our improvement of the running time. 
Indeed, while each term $A_{x_i}$ is $s$-sparse, the whole matrix $A$ is only $t$-sparse with $t=sm$. 
Using the matrices of \cref{eq:second} thus enables us to reduce the time complexity of the iterated matrix multiplication from $\myO{t^r}$ to $\myO{s^r}$. This improvement is especially remarkable for the case of local Hamiltonians:
for a $k$-local Hamiltonian with $k=O(1)$ and $m=\poly(n)$ terms, we have $s=O(1)$ and $t=\poly(n)$.

We now give more details about the sampling process. 
We first consider the probability distribution $q\colon[m]^r\to[0,1]$ defined 
as $q(x)=\norm{A_{x_1}}\cdots\norm{A_{x_r}}$ for any $x\in[m]^r$. Then we consider the random variable
\begin{equation}\label{eq:2}
	\frac{\bra{\psi}A_{x_1}\cdots A_{x_r}\ket{\psi}}{q(x)},
\end{equation}
where $x$ is sampled according to $q$.
It is easy to see that this is an unbiased estimator of $\bra{\psi}A^r\ket{\psi}$. While the exact value of $\bra{\psi}A_{x_1}\cdots A_{x_r}\ket{\psi}$ in \cref{eq:2} cannot be computed efficiently, we can use iterated matrix multiplication with the matrices of \cref{eq:second} to estimate $\bra{\psi}A_{x_1}\cdots A_{x_r}\ket{\psi}$, which leads to a biased estimator of $\bra{\psi}A^r\ket{\psi}$. We show that the bias can be controlled and show that we can  bound the variance as well. Taking the mean of a reasonably small number of repetitions thus gives a good estimation of $\bra{\psi}A^r\ket{\psi}$, as desired.




\paragraph{Generalized overlap.}
To cover a broader range of applications, we make one slight generalization to the notion of overlap presented in this introduction. 
Since we are only interested in computing an $\varepsilon$-approximation of the ground state energy, instead of considering the overlap of the guiding state with the ground state energy subspace, we can consider the overlap of the guiding state with the whole subspace of energy in the interval $[\en{A},(1+\Theta(\varepsilon))\en{A}]$. In most applications the latter can be significant larger than the former. We refer to Section \ref{sec:prelim} for more details.

\subsection{Organization of the paper}
After giving formal definitions and presenting some lemmas in Section \ref{sec:prelim}, we present the three techniques mentioned above in Sections \ref{sec:IMM}, \ref{sec:PT} and \ref{sub:EV1}. Our main technical contribution is the implementation of the polynomial transform (Section \ref{sec:PT}). We give the full statements and proofs of Theorems \ref{th:main-informal} and \ref{cor:main-informal} in Section \ref{sub:EV2}.

\section{Preliminaries}\label{sec:prelim}
In this section we introduce notations and definitions, and present some useful lemmas.

\subsection{Notations}
\paragraph{General notations.}
For any integer $N$ we write $[N]=\set{1}{N}$. Define $\Real[\poly(n)]$ as the set of all real numbers with binary expansion of polynomial length. More precisely,
for any function $f\colon\Nat\to\Nat$, we define the set 
\[
	\Real[f(n)]=\left\{\pm\left( a_0+\sum_{i=1}^{f(n)} a_i2^{i}+\sum_{i=1}^{f(n)} b_i2^{-i}\right)\:\:\Big\vert\:\: a_{0},\ldots,a_{f(n)},b_1,\ldots,b_{f(n)}\in \{0,1\}\right\}\subseteq \Real
\]
of 
all real numbers with binary expansion of polynomial length $2f(n)+2$ (including one bit for encoding the sign). Then $\Real[\poly(n)]$ is the union of the $\Real[f(n)]$ for all polynomial functions $f(n)$.  We define $\Comp[f(n)]$ and $\Comp[\poly(n)]$ similarly, by requiring that both the real part and the imaginary part are in $\Real[f(n)]$ and $\Real[\poly(n)]$, respectively.\vspace{-2mm}

\paragraph{Vectors and matrices.}
In this paper we consider vectors in $\Comp^{N}$ and matrices in $\Comp^{N\times N}$, for some integer $N$, and write $n=\ceil{\log_2(N)}$. Note that this notation is consistent with the notation of Section \ref{sec:intro}, where we considered the special case $N=2^n$.
We usually write quantum states (i.e., unit-norm vectors) using Greek letters and Dirac notation, e.g., we use $\ket{\psi}$ or $\ket{\varphi}$. We write arbitrary vectors (i.e., vectors of arbitrary norm) using Roman letters, e.g., we use $v$ or $w$. 

For a matrix $A\in\Comp^{N\times N}$ and any $\ell\in[N]$, we denote the $\ell$-th row of $A$ by $A[\ell,\cdot]$. The matrix $A$ is normal if it can be written $A=UDU^{-1}$ where $D$ is a diagonal matrix with real entries and $U$ is a unitary matrix. 
We use $\norm{A}$ to denote the spectral norm of $A$, which is defined for a normal matrix as the maximum magnitude of the eigenvalues of $A$ (and defined as square root of the maximum eigenvalue of $A^\ast A$, where $A^\ast$ denotes the conjugate transpose of $A$, in general). This norm is submultiplicative, i.e., the inequality $\norm{AB}\le \norm{A}\norm{B}$ holds for any matrices $A,B\in\Comp^{N\times N}$. We also have $\abs{u^\ast A v}\le \norm{A}\norm{u}\norm{v}$ for any $u,v\in\Comp^N$, where $\norm{u}$ and $\norm{v}$ denote the Euclidean norms of $u$ and $v$, respectively.\vspace{-2mm}

\paragraph{Eigenvalues and overlap.}
Consider a normal matrix $A\in\Comp^{N\times N}$. Let 
\[
	A=\sum_{i=1}^{2^n}\lambda_i\ket{u_i}\bra{u_i}
\]
be its eigenvalue decomposition,
with eigenvalues
$\lambda_1\le \lambda_2\le\cdots\le\lambda_{2^n}$ and corresponding orthonormal eigenvectors $\ket{u_1},\ldots,\ket{u_{2^n}}$. We denote by $\en{A}=\lambda_1$ the smallest eigenvalue of $A$.
For any $\sigma\ge 0$, let us write $S(A,\sigma)=\{i\in[N]\:|\:\lambda_i(A)\le \en{A}+\sigma\}$. For any vector $w\in\Comp^N$,
let 
\[
	\overlapf{A,w}{\sigma}=\sqrt{\sum_{i\in S(A,\:\sigma)}\abs{\braket{u_i}{w}}^2}
\]
denote the overlap of $w$ with the eigenspace corresponding to eigenvalues in $[\en{A},\en{A}+\sigma]$. Note that the standard definition of the overlap (used in Section \ref{sec:intro}) corresponds to the case $\sigma=0$. 

\subsection{Access to vectors and matrices}
We now define the notions of access to vectors and matrices needed for this work. These notions are similar to prior works on dequantization \cite{Chia+JACM22,LeGall23,TangSTOC19}, but we need to precisely discuss the encoding length and the space complexity.

We start with query access to a vector.

\begin{definition}\label{def:q-access} 
	We have query access to a vector $w\in \Comp^{N}$ with encoding length $\len{w}$ and costs $\qt{w}$ and $\qs{w}$ if 
	\begin{itemize}
		\item[1.] for each $i\in[N]$, we have $w_i\in\Comp[\len{w}]$;
		\item[2.] for any $i\in[N]$, the coordinate $w_i$ can be obtained in $\qt{w}$ time and $\qs{w}$ space.
	\end{itemize}
	If $\len{w},\qt{w},\qs{w}\le \poly(n)$,
	we simply say that we have query access to $w$.
\end{definition}

Next, we introduce the stronger notion of sample-and-query access to a vector.
\begin{definition}\label{def:samplable}
We have sample-and-query access to a vector $w\in \Comp^{N}$ if 
\begin{itemize}
	\item[1.] we have query access to $w$;
	\item[2.] we can compute in $\poly(n)$ time\footnote{Since a polynomial upper bound on the time complexity implies a polynomial upper bound on the space complexity, hereafter we omit to explicitly mention that the space complexity is $\poly(n)$ as well.} a sample from the distribution $p\colon [N]\to [0,1]$ such that \[p(i)=\frac{\abs{w_i}^2}{\norm{w}^2}\] for each $i\in[N]$.
\end{itemize}
\end{definition}
\noindent
When $\norm{w}=1$,
Item 2 in \cref{def:samplable} states that we can efficiently sample from the same distribution as the distribution obtained when measuring the quantum state $\sum_{i=1}^Nw_i\ket{i}$ in the computational basis.  

We extend the notion of query access to matrices as follows:
\begin{definition}\label{def:ma}
	We have query access to a matrix $B\in \Comp^{N\times N}$ if
	\begin{itemize}
		\item[1.]for each $(i,j)\in [N]\times[N]$, we have 
		$B[i,j]\in\Comp[\poly(n)]$;
		\item[2.] for any  $(i,j)\in [N]\times[N]$, the entry $B[i,j]$ can be obtained in $\poly(n)$ time; 
		\item[3.] for any $i\in [N]$, the number $s_i$ of nonzero entries in $B[i,\cdot]$ can be obtained in $\poly(n)$ time; 
		\item[4.] for any $i\in [N]$ and any $\ell\in[s_i]$, the $\ell$-th nonzero entry of $B[i,\cdot]$ can be obtained in $\poly(n)$ time.
	\end{itemize}
\end{definition}
Items 3 and 4 in \cref{def:ma} are needed to deal with sparse matrices.

\subsection{Local Hamiltonians and matrix decompositions}
We give below technical details about the description of local Hamiltonians and the matrix decompositions introduced in this work.\vspace{-2mm}

\paragraph{Description of local Hamiltonians.}
A $k$-local Hamiltonian acting on $n$ qubits is a Hermitian matrix $H\in\Comp^{N\times N}$ with $N=2^n$ that can be written as  
\[
H=
\sum_{i=1}^m H_i
\]
with $m=\poly(n)$,
where each term $H_i$ is an Hermitian matrix acting non-trivially on at most~$k$ qubits. Each $H_i$ can be described by a $2^k\times 2^k$ matrix representing its action on the $k$ qubits on which it acts non-trivially. We assume that each entry of this description is in $\Comp[\poly(n)]$. 
This description is given as input. For convenience we also assume that we know $\norm{H_i}$ for each $i\in[N]$.\footnote{Note that each $\norm{H_i}$ can be computed from its description as a $2^k\times 2^k$ matrix. The computation is efficient when $k$ is small, e.g., for $k=O(\log n)$, which is the most interesting regime for the local Hamiltonian problem.}

\paragraph{Matrix decomposition.}
Here is the complete definition of the matrix decomposition we consider.

\begin{definition}\label{def:decomp}
	For a matrix $A$, an integer $s\ge 0$ and a real number $\upw\in\Real[\poly(n)]$, an $(s,\upw)$-decomposition of $A$ is a decomposition 
	\[
		A=\sum_{i=1}^m A_i  
		\hspace{3mm}\textrm{with}\hspace{2mm}
		\sum_{i=1}^m \norm{A_i}\le \upw
	\] 
	in which $A_i$ is an $s$-sparse matrix for each $i\in[m]$. We always (implicitly) assume the following:
	\begin{itemize}
		\item for each $i\in [m]$, we have query access to the matrix $A_i$;
		\item we know bounds $\upw_1,\ldots,\upw_m\in\Real[\poly(n)]$ such that $\norm{A_i}\le \upw_i$ for each $i\in[m]$ and $\sum_{i=1}^m \upw_i=\upw$.
	\end{itemize}
	  If $\upw=1$, we simply call the decomposition an $s$-decomposition.
\end{definition}

\subsection{Lemmas}
We present four lemmas that are needed to prove our results.
 
The first lemma is the ``powering lemma'' from \cite{Jerrum1986} to amplify the success probability of probabilistic estimators (the formulation below for complex numbers is from \cite[Lemma 3]{LeGall23}):
\begin{lemma}[Powering lemma]\label{lemma:powering}
Consider a randomized algorithm that produces an estimate $\tilde \mu$ of a complex-valued quantity $\mu$ such that $\abs{\tilde \mu-\mu}\le \varepsilon$ holds with probability at least $3/4$. Then, for any $\delta>0$, it suffices to repeat $O(\log(1/\delta))$ times the algorithm and take both the median of the real parts and the median of the imaginary parts to obtain an estimate $\hat \mu$ such that $\abs{\hat \mu-\mu}\le \sqrt{2}\varepsilon$ holds with probability at least $1-\delta$.
\end{lemma}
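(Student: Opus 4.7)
The plan is to reduce the complex-valued claim to two independent real-valued median-of-estimates arguments, and then carry out a standard Chernoff/median argument. First I would observe that whenever $\abs{\tilde \mu-\mu}\le \varepsilon$ holds for a single trial, both $\abs{\Re(\tilde \mu)-\Re(\mu)}\le \varepsilon$ and $\abs{\Im(\tilde \mu)-\Im(\mu)}\le \varepsilon$ hold. Call such a trial \emph{good}; by hypothesis, each trial is good with probability at least $3/4$ independently. The $\sqrt{2}$ factor at the end will come from recombining the real and imaginary coordinate-wise error bounds via $\sqrt{\varepsilon^2+\varepsilon^2}=\sqrt{2}\varepsilon$.

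Next I would run $T=C\log(1/\delta)$ independent copies of the algorithm for a sufficiently large constant $C>0$, producing estimates $\tilde \mu^{(1)},\ldots,\tilde \mu^{(T)}$. Let $X_j$ be the indicator that the $j$-th trial is good, so that $\E[X_j]\ge 3/4$. A standard Chernoff bound on $\sum_{j=1}^T X_j$ gives that strictly more than $T/2$ of the trials are good, except with probability at most $\delta$, provided $C$ is chosen large enough.

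The key observation for the median step is the following: if strictly more than half of the values $\Re(\tilde \mu^{(j)})$ lie in the interval $[\Re(\mu)-\varepsilon,\Re(\mu)+\varepsilon]$, then so does their median, because any value strictly smaller than $\Re(\mu)-\varepsilon$ (resp. strictly larger than $\Re(\mu)+\varepsilon$) is dominated by (resp. dominates) fewer than half of the samples. Applying this observation to both the real parts and the imaginary parts on the high-probability event identified above, we get $\abs{\Re(\hat \mu)-\Re(\mu)}\le \varepsilon$ and $\abs{\Im(\hat \mu)-\Im(\mu)}\le \varepsilon$ simultaneously, hence $\abs{\hat \mu-\mu}\le \sqrt{2}\varepsilon$, with probability at least $1-\delta$.

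I do not expect any serious obstacle here; the argument is entirely standard. The only subtlety worth stressing is that taking the componentwise median does \emph{not} in general return any of the sampled complex numbers, so one must argue about $\Re$ and $\Im$ separately rather than trying to show that the ``complex median'' is itself a good trial — this is exactly the reason the bound degrades from $\varepsilon$ to $\sqrt{2}\varepsilon$.
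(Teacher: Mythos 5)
Your proof is correct: the coordinatewise reduction ($|\tilde\mu-\mu|\le\varepsilon$ implies both real and imaginary deviations are at most $\varepsilon$), the Chernoff bound showing more than half of the $O(\log(1/\delta))$ trials are good, the observation that the median of the real (resp.\ imaginary) parts then lies within $\varepsilon$ of $\Re(\mu)$ (resp.\ $\Im(\mu)$), and the recombination giving $\sqrt{2}\,\varepsilon$ are exactly the standard argument. The paper itself gives no proof of this lemma, importing it from \cite{Jerrum1986} and \cite[Lemma 3]{LeGall23}, and your argument is precisely the one underlying those references, including the correct explanation of where the $\sqrt{2}$ loss comes from.
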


To perform eigenvalue estimation we will need a low-degree polynomial that approximates well the ``rectangle'' function. We will use the following result from \cite{Gilyen+STOC19}.\footnote{\cref{lemma:poly-cons} follows by taking 
$t=\tau+\theta/2$ and $\delta'=\theta/2$ in Lemma 29 of \cite{Gilyen+STOC19}.
The computability of the polynomial is discussed explicitly in \cite[Appendix A.3]{LeGall+23}.} 
\begin{lemma}[Lemma 29 in \cite{Gilyen+STOC19}]\label{lemma:poly-cons}
    For any $\xi\in(0,1]$, any $\tau\in[0,1)$ and any $\theta\in(0,1-\tau]$, there exists an efficiently computable polynomial $P\in\Real[x]$ of degree $O\!\left(\frac{1}{\theta}\log(1/\xi)\right)$ such that $|P(x)|\in [0,1]$ for all  $x\in[-1,1]$ and
    \begin{equation}\label{eq:P}
        \begin{cases}
        P(x)\in [1-\xi,1]& \textrm{ if } x\in[0,\tau],\\
        P(x)\in [0,\xi]& \textrm{ if } x\in[\tau+\theta,1].\\
        \end{cases}
    \end{equation}
\end{lemma}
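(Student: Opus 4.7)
The plan is to reduce the construction of $P$ to a polynomial approximation of the sign function on a two-sided interval, composed with a linear change of variable that aligns the transition region.

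\textbf{Step 1 (sign approximation).} I would first invoke the standard fact that for any $\delta'\in(0,1/2]$ and $\xi'\in(0,1/2]$, there is an efficiently computable polynomial $S\in\Real[y]$ of degree $d_0=O\!\left(\tfrac{1}{\delta'}\log(1/\xi')\right)$ such that $|S(y)|\le 1$ on $[-1,1]$, $|S(y)-1|\le \xi'$ on $[\delta',1]$, and $|S(y)+1|\le \xi'$ on $[-1,-\delta']$. The canonical construction takes a truncated Chebyshev expansion of $\mathrm{erf}(Ky)$ for $K=\Theta(\sqrt{\log(1/\xi')}/\delta')$ and then rescales; the analyticity of $\mathrm{erf}$ in a strip gives exponentially decaying Chebyshev coefficients, which yields both the degree bound and efficient computability of the coefficients.

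\textbf{Step 2 (linear change of variable).} Set $\delta=\theta/4$ and define
\[
y(x)=\alpha-\beta x,\qquad \alpha=\frac{2\tau+\theta}{4},\quad \beta=\frac{1}{2}.
\]
A direct computation gives $y(\tau)=\delta$ and $y(\tau+\theta)=-\delta$, so the transition region $[\tau,\tau+\theta]$ maps to the ``bad'' slab $[-\delta,\delta]$ of the sign approximation. The hypothesis $\tau+\theta\le 1$ guarantees $y([-1,1])\subseteq[-1,1]$: indeed $y(-1)=(2\tau+\theta+2)/4\le 1$ because $2\tau+\theta\le 2$, and $y(1)=(2\tau+\theta-2)/4\ge -1/2$ because $\tau,\theta\ge 0$.

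\textbf{Step 3 (assembly).} Define $P(x)=\tfrac{1}{2}\bigl(1+S(y(x))\bigr)$, where $S$ is obtained from Step 1 with parameters $\delta'=\delta=\theta/4$ and $\xi'=\xi$. Then for $x\in[0,\tau]$ the argument $y(x)$ lies in $[\delta,\alpha]\subseteq[\delta,1]$, so $S(y(x))\in[1-\xi,1]$ and $P(x)\in[1-\xi/2,1]\subseteq[1-\xi,1]$; for $x\in[\tau+\theta,1]$ it lies in $[y(1),-\delta]\subseteq[-1,-\delta]$, so $S(y(x))\in[-1,-1+\xi]$ and $P(x)\in[0,\xi/2]\subseteq[0,\xi]$; and for all $x\in[-1,1]$, $|S(y(x))|\le 1$ gives $|P(x)|\in[0,1]$. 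The degree of $P$ equals that of $S$, namely $O\!\left(\tfrac{1}{\delta}\log(1/\xi)\right)=O\!\left(\tfrac{1}{\theta}\log(1/\xi)\right)$, and efficient computability of $P$ follows from Step~1 together with the explicit linear map.

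\textbf{Main obstacle.} The entire difficulty is concentrated in Step 1: proving the existence and the degree bound of the sign-approximating polynomial $S$. One has to (i) bound the tail of the Chebyshev expansion of an $\mathrm{erf}$-like function using its analytic extension to a Bernstein ellipse, and (ii) separately control the pointwise error of the truncated polynomial on the two ``good'' regions $[-1,-\delta]$ and $[\delta,1]$, where a naive estimate would be off by factors of the degree. Steps 2 and 3 are, by contrast, a routine reparametrization and verification of the three conditions of \cref{eq:P}.
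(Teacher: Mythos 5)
Your construction is correct: I checked the affine map in Step 2 ($y(\tau)=\theta/4$, $y(\tau+\theta)=-\theta/4$, $y([-1,1])\subseteq[-1/2,\,(2\tau+\theta+2)/4]\subseteq[-1,1]$ using $\tau+\theta\le 1$), and Step 3 then delivers all three conditions of \cref{eq:P} with degree $O\bigl(\tfrac{1}{\theta}\log(1/\xi)\bigr)$, so the reduction is sound provided the sign-approximation fact of Step 1 is granted. The paper itself gives no proof of this lemma: it simply cites Lemma 29 of Gily\'en, Su, Low and Wiebe (the symmetric ``rectangle function'' polynomial), instantiated with $t=\tau+\theta/2$ and $\delta'=\theta/2$, and defers computability to the literature. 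Your route is therefore a genuine (if mild) alternative: because the statement here is one-sided --- nothing is required on $[-1,0)$ --- a single shifted sign approximation composed with an affine map suffices, and you do not need the two-sided rectangle construction that Gily\'en et al.\ assemble from two shifted $\mathrm{erf}$-approximants. Both arguments ultimately rest on the same primitive (degree-$O(\tfrac{1}{\delta}\log(1/\xi))$ approximation of the sign function via truncated Chebyshev expansions of $\mathrm{erf}(Ky)$), which you correctly identify as the only nontrivial ingredient and which you invoke, as the paper does, as a known result rather than reproving it. Two cosmetic points: your Step 1 assumes $\xi'\le 1/2$ while the lemma allows $\xi\in(0,1]$; taking $\xi'=\min(\xi,1/2)$ fixes this (the regime $\xi>1/2$ is essentially vacuous, and the paper only ever uses $\xi=\chi^2/12$). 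Also note $\delta'=\theta/4$ is automatically in $(0,1/2]$ since $\theta\le 1$, so the parameter ranges of the sign lemma are respected.
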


We will use the following result from \cite{Sherstov2013} that gives an upper bound on the coefficients of polynomials bounded in the interval $[-1,1]$ (such as the polynomial from \cref{lemma:poly-cons}). 
\begin{lemma}[Lemma 4.1 in \cite{Sherstov2013}]\label{lemma:poly}
	Let $P(x)=\sum_{i=0}^d a_i x^i$ be a univariate polynomial of degree $d$ such that $\abs{P(x)}\le1$ for all $x\in[-1,1]$. Then
	\[
		\sum_{i=0}^d \abs{a_i}\le 4^d.
	\]
\end{lemma}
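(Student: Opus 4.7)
The plan is to pass through the Chebyshev basis of the first kind. Since $P$ is bounded by $1$ on $[-1,1]$, I expect its Chebyshev coefficients to be uniformly small; I will then bound the $\ell_1$-norm of the monomial coefficients of each individual Chebyshev polynomial and combine everything via the triangle inequality.

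First I would expand $P(x)=\sum_{k=0}^d c_kT_k(x)$, where $T_k$ denotes the degree-$k$ Chebyshev polynomial of the first kind. Using orthogonality of the $T_k$ on $[-1,1]$ with respect to the weight $(1-x^2)^{-1/2}$, we have the integral formula $c_k=\frac{2}{\pi}\int_{-1}^1 P(x)T_k(x)(1-x^2)^{-1/2}dx$ for $k\ge 1$ (and half as much for $k=0$). Since $|P(x)|,|T_k(x)|\le 1$ on $[-1,1]$ and $\int_{-1}^1(1-x^2)^{-1/2}dx=\pi$, this immediately yields $|c_k|\le 2$ for every $k$.

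Next I would convert each $T_k$ back to the monomial basis. Writing $T_k(x)=\sum_j t_{k,j}x^j$, the key observation is that the nonzero $t_{k,j}$ occur only at indices $j$ of the same parity as $k$, and their signs alternate as $j$ decreases from $k$ to $0$; this can be established by induction on $k$ using the recurrence $T_{k+1}=2xT_k-T_{k-1}$. Consequently, evaluating at $x=i$ aligns every nonzero term to a common sign, giving the clean identity $\sum_j|t_{k,j}|=|T_k(i)|$. Using the closed form $T_k(x)=\tfrac{1}{2}\bigl[(x+\sqrt{x^2-1})^k+(x-\sqrt{x^2-1})^k\bigr]$ at $x=i$, where $\sqrt{x^2-1}=i\sqrt{2}$, we obtain $|T_k(i)|\le\tfrac{1}{2}\bigl[(1+\sqrt{2})^k+(\sqrt{2}-1)^k\bigr]\le(1+\sqrt{2})^k$.

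Combining the two ingredients via the triangle inequality gives
\[
\sum_{i=0}^d|a_i|\;\le\;\sum_{k=0}^d|c_k|\sum_j|t_{k,j}|\;\le\;2\sum_{k=0}^d(1+\sqrt{2})^k\;\le\;\sqrt{2}\,(1+\sqrt{2})^{d+1}.
\]
Since $(1+\sqrt{2})^2=3+2\sqrt{2}<8$, a short computation shows the right-hand side is at most $4^d$ whenever $d\ge 3$; the remaining cases $d\in\{0,1,2\}$ are handled directly by evaluating $P$ at $0$ and $\pm 1$ (which express each $a_i$ as a short linear combination of values of $P$, each bounded by $1$), yielding $\sum|a_i|\le 4\le 4^d$. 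The step most likely to need care is the sign-pattern claim underpinning the identity $\sum_j|t_{k,j}|=|T_k(i)|$: although transparent on small examples, a clean inductive proof requires tracking how $T_{k+1}=2xT_k-T_{k-1}$ propagates the alternating sign structure through the two parity classes. Once that identity is secured, the remainder of the argument is just orthogonality plus a geometric-series bound.
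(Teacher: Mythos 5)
The paper does not prove this lemma at all: it is imported verbatim as Lemma 4.1 of Sherstov (2013), so there is no internal proof to compare against. Your argument is correct and is essentially the standard (Sherstov-style) proof: expand in the Chebyshev basis, bound each Chebyshev coefficient by $2$ via orthogonality, bound the $\ell_1$-norm of the monomial coefficients of $T_k$ by $\abs{T_k(i)}\le(1+\sqrt{2})^k$ using the alternating-sign structure, and sum the geometric series. All the individual steps check out, including the sign-pattern identity $\sum_j\abs{t_{k,j}}=\abs{T_k(i)}$, which follows from the explicit expansion of $T_k$ (coefficient of $x^{k-2m}$ has sign $(-1)^m$) or by induction on the recurrence. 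One cosmetic slip: in the small-degree cases your closing claim ``$\sum_i\abs{a_i}\le 4\le 4^d$'' fails as written for $d=0$ (and is tight only by equality for $d=1$); the direct evaluations you invoke actually give $\abs{a_0}\le 1$ for $d=0$, $\abs{a_0}+\abs{a_1}\le 2$ for $d=1$, and $\le 4$ for $d=2$, each of which is at most $4^d$, so the conclusion stands after restating that line degree by degree.
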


Finally, we discuss how to classically estimate the inner product $\braket{\psi}{w}$ given sample-and-query access to a quantum state $\ket{\psi}$ and query access to a vector $w$. More precisely, we are considering the following problem:

\begin{center}
	\ovalbox{
	\begin{minipage}{13 cm} \vspace{2mm}
	
	\noindent\hspace{3mm}$\IP(\varepsilon,\delta)$\hspace{5mm}{\tt(Estimation of Inner Product)}\\\vspace{-3mm}
	
	\noindent\hspace{3mm} Input: $\bul$ sample-and-query access to a quantum state $\ket{\psi}\in \Comp^{N}$

	\noindent\hspace{15mm}
	$\bul$ query access to a vector $w\in \Comp^{N}$ with encoding length $\len{w}$ and
	
	\hspace{91mm}
	costs $\qt{w}$ and $\qs{w}$

	\vspace{2mm}
	
	\noindent\hspace{3mm} Output: an estimate $a\in\Comp$ such that 		
		\[
			\abs{a-\bra{\psi}w\rangle}\le \varepsilon\,\norm{w}
		\]

	\noindent\hspace{25mm}
	holds with probability at least $1-\delta$
	\vspace{2mm}
	\end{minipage}
	}
\end{center}
\vspace{2mm}

\noindent Prior works on dequantization \cite{Chia+JACM22,LeGall23,TangSTOC19} have shown how to solve this problem efficiently. It can be easily checked that these approaches are space-efficient as well, leading to the following statement.  For completeness we give a proof in Appendix \ref{app}. 
\begin{lemma}\label{lemma:Tang}
	For any $\varepsilon\in(0,1]$ and any $\delta\in(0,1]$, 
	the problem $\IP(\varepsilon,\delta)$ can be solved classically in time
	\[
		\myO{\qt{w}\,\varepsilon^{-2}\log (1/\delta)}
	\] 
	and space
	\[
		\qs{w}+\myO{\Big(\len{w}+\log(1/\varepsilon)\Big)\log(1/\delta)}.
	\]
\end{lemma}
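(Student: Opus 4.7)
The plan is to use importance sampling. Writing $\braket{\psi}{w}=\sum_{i\in[N]}\overline{\psi_i}w_i$, I would draw $i\in[N]$ from the distribution $p(i)=\abs{\psi_i}^2$ (available from the sample access to $\ket{\psi}$, since $\norm{\psi}=1$) and consider the random variable $X = w_i/\psi_i$, which is well-defined whenever $p(i)>0$ and is computable using one query to $\ket{\psi}$ and one query to $w$. A direct calculation gives
\[
\ex{X} = \sum_i \abs{\psi_i}^2 \cdot \frac{w_i}{\psi_i} = \sum_i \overline{\psi_i}w_i = \braket{\psi}{w},
\qquad
\ex{\abs{X}^2} = \sum_i \abs{w_i}^2 = \norm{w}^2,
\]
so $\var{X}\le\norm{w}^2$. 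Chebyshev's inequality then shows that averaging $k=O(1/\varepsilon^2)$ independent copies of $X$ produces an estimate within additive error $\varepsilon\norm{w}$ with probability at least $3/4$. Applying \cref{lemma:powering} with $O(\log(1/\delta))$ outer repetitions, and taking the two coordinatewise medians, boosts the success probability to $1-\delta$ at the cost of only a constant-factor loss in $\varepsilon$.

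For the time bound, each of the $O(\varepsilon^{-2}\log(1/\delta))$ samples requires one call to the sampler for $p$, one query to a coordinate of $\ket{\psi}$, one query to a coordinate of $w$, and $\poly(n)$ arithmetic, for a per-sample cost of $\qt{w}+\poly(n)$. Multiplying gives total time $\myO{\qt{w}\,\varepsilon^{-2}\log(1/\delta)}$, as stated.

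For the space bound, the key observation is that each of the $O(\log(1/\delta))$ inner empirical means can be computed on the fly: once a sample $X$ has been added to the running sum it can be discarded, so at any time only the current running sum, the (at most $O(\log(1/\delta))$) already-completed inner estimates needed for the final medians, and scratch space for one query to $\ket{\psi}$ and one query to $w$ must coexist in memory. Each stored numerical value fits in $\len{w}+\poly(n)+O(\log k)=\myO{\len{w}+\log(1/\varepsilon)}$ bits, since $\psi_i$ is stored with $\poly(n)$ bits and averaging $k=O(1/\varepsilon^2)$ terms inflates the scale by only $O(\log k)$. Combined with the $\qs{w}$ space needed per query to $w$, this yields the claimed $\qs{w}+\myO{(\len{w}+\log(1/\varepsilon))\log(1/\delta)}$ space bound.

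The main obstacle is the precision bookkeeping rather than the probabilistic analysis: an individual sample $X=w_i/\psi_i$ can have arbitrarily large magnitude when $\psi_i$ is tiny, so it is not immediately obvious that truncating each $X$ to $\myO{\len{w}+\log(1/\varepsilon)}$ bits preserves the Chebyshev guarantee. The standard remedy is to check that, after truncation, the second moment of the estimator is still at most $(1+o(1))\norm{w}^2$, so the variance bound survives with only a constant blow-up; this step is routine but has to be carried out explicitly to justify the stated space complexity.
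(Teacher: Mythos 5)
Your proposal follows essentially the same route as the paper's proof in Appendix A: sample $i$ from $p$, form the unbiased estimator $w_i/\braket{i}{\psi}$, bound its second moment by $\norm{w}^2$, average $O(1/\varepsilon^2)$ copies via Chebyshev, and boost with the powering lemma while storing only the $O(\log(1/\delta))$ inner estimates. Your final precision worry is not a real obstacle in the paper's model, since $\braket{i}{\psi}\in\Comp[\poly(n)]$ forces every nonzero amplitude to have magnitude at least $2^{-\poly(n)}$, so each quotient fits (after rounding) in $\len{w}+\poly(n)+O(\log(1/\varepsilon))$ bits, exactly as the paper assumes, with the $\poly(n)$ factors absorbed by the $\myO{\cdot}$ notation.
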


\section{Iterated Matrix Multiplication}\label{sec:IMM}
In this section we show how to classically estimate the inner product $\bra{\psi}B_r\cdots B_1\ket{\varphi}$ for sparse matrices $B_1,\ldots,B_r$ and two quantum states $\ket{\psi}$ and $\ket{\varphi}$ to which we have classical access. 
More precisely, we consider the following problem.

\begin{center}
	\ovalbox{
	\begin{minipage}{14.8 cm} \vspace{2mm}
	
	\noindent\hspace{3mm}$\MM(s,r,\varepsilon,\delta)$\hspace{5mm}{\tt(Estimation of Iterated Matrix Multiplication)}\\\vspace{-3mm}
	
	\noindent\hspace{3mm} Input: $\bul$ query access to $s$-sparse matrices $B_1,\ldots,B_r\in\Comp^{N\times N}$ 
	

	\noindent\hspace{15mm}
	$\bul$ query access to a quantum state $\ket{\varphi}\in \Comp^{N}$
	
	\noindent\hspace{15mm}
	$\bul$ sample-and-query access to a quantum state $\ket{\psi}\in \Comp^{N}$
	\vspace{2mm}
	
	\noindent\hspace{3mm} Output: an estimate $\hat E\in\Comp$ such that 		
		\[
			\abs{\hat E - \bra{\psi}B_r\cdots B_1\ket{\varphi}}\le \varepsilon\, \norm{B_1}\cdots \norm{B_r}
		\]

	\noindent\hspace{25mm}
	holds with probability at least $1-\delta$
	\vspace{2mm}
	\end{minipage}
	}
\end{center}
\vspace{2mm}

Here is the main result of this section.

\begin{proposition}\label{th:IMM}
	For any $s\ge 1$, any $r\ge 1$, any $\varepsilon\in(0,1]$ and any $\delta\in(0,1]$, the problem $\MM(s,r,\varepsilon,\delta)$ can be solved classically in time
	\[\myO{  s^r\varepsilon^{-2}\log(1/\delta)}\] and space
	\[
		\myO{r^2+\big(r+\log(1/\varepsilon)\big)\log(1/\delta)}.
	\]
\end{proposition}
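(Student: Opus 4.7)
The plan is to combine \cref{lemma:Tang} with a recursive implementation of query access to the vector $w := B_r \cdots B_1 \ket{\varphi}$. Since $\ket{\varphi}$ is a unit vector, submultiplicativity of the spectral norm gives $\norm{w} \le \norm{B_1}\cdots\norm{B_r}$, so any estimate of $\braket{\psi}{w}$ with additive error $\varepsilon\norm{w}$ is automatically an $\varepsilon\norm{B_1}\cdots\norm{B_r}$-approximation of $\bra{\psi}B_r\cdots B_1\ket{\varphi}$. Hence it suffices to implement query access to $w$ with small enough cost parameters $\qt{w}$, $\qs{w}$, $\len{w}$ and invoke \cref{lemma:Tang}.

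To implement query access to $w$, I would introduce the intermediate vectors $w^{(k)} := B_k \cdots B_1 \ket{\varphi}$ for $k \in \set{0}{r}$, so that $w^{(0)} = \ket{\varphi}$ and $w^{(r)} = w$, and compute their entries recursively. The base case $w^{(0)}[\ell] = \braket{\ell}{\varphi}$ comes directly from the query access to $\ket{\varphi}$. For $k \ge 1$, computing $w^{(k)}[\ell]$ proceeds by querying the $\ell$-th row of $B_k$ to obtain its at most $s$ nonzero positions $j_1,\ldots,j_t$ with $t\le s$ and the corresponding values, then recursively computing $w^{(k-1)}[j_i]$ for each $i$, and finally returning $\sum_{i=1}^t B_k[\ell,j_i]\cdot w^{(k-1)}[j_i]$. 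Items~1--4 of \cref{def:ma} supply exactly what is needed for the row query.

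The time cost obeys the recurrence $T(k) \le s\cdot T(k-1) + \poly(n)$, giving $\qt{w} = \myO{s^r}$. Each entry $w^{(k)}[\ell]$ is an exact sum of at most $s^k$ products of $k+1$ numbers from $\Comp[\poly(n)]$, so $\len{w} = \myO{r}$. For space I would run the recursion depth-first with an explicit stack of depth at most $r$: at each active level only the current index, a counter over the at most $s$ children, and the running partial sum need to be retained, and the partial sum at depth $d$ (computing an entry of $w^{(r-d)}$) has encoding length $O((r-d)\cdot\poly(n))$; summing over all levels yields $\qs{w} = \myO{r^2}$. Plugging these three quantities into \cref{lemma:Tang} gives the bounds $\myO{s^r \varepsilon^{-2}\log(1/\delta)}$ on time and $\myO{r^2 + (r + \log(1/\varepsilon))\log(1/\delta)}$ on space claimed in the proposition.

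The delicate point is the space bound. A breadth-first evaluation would either simultaneously materialize all $s^r$ leaves of the recursion tree or require storing intermediate vectors $w^{(k)}$ of dimension $N=2^n$, neither of which is acceptable. The depth-first strategy ensures that only one root-to-leaf path of the recursion tree is active at any time, so the stack usage grows as the sum of per-level encoding lengths rather than as the total size of the tree; this is what converts the naive $\poly(N)$ space requirement into the claimed $\myO{r^2}$ bound.
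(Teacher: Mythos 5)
Your proposal is correct and follows essentially the same route as the paper: the depth-first recursion over the intermediate vectors $w^{(k)}=B_k\cdots B_1\ket{\varphi}$ is exactly the paper's Algorithm $\Aa(\ell,r)$ in \cref{lemma2}, with the same $\myO{s^r}$ time recurrence, $\myO{r}$ encoding length, and $\myO{r^2}$ stack-based space bound, and the final step of feeding this query access into \cref{lemma:Tang} together with $\norm{B_r\cdots B_1\ket{\varphi}}\le\norm{B_1}\cdots\norm{B_r}$ matches the paper's proof of \cref{th:IMM}.
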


The proof of \cref{th:IMM} is based on the following lemma, which can be seen as a space-efficient version of the approach for iterated matrix multiplication used in \cite{Gharibian+SICOMP2023,Schwarz2013}.

\begin{lemma}\label{lemma2}
	There is a classical algorithm that implements query access to the vector $B_r\cdots B_1\ket{\varphi}$ with encoding length $\len{B_r\cdots B_1 \ket{\varphi}}=\myO{r}$ and costs
	$\qt{B_r\cdots B_1\ket{\varphi}}=\myO{s^{r}}$ and $\qs{B_r\cdots B_1\ket{\varphi}}=\myO{r^2}$.
\end{lemma}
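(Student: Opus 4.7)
The plan is to implement query access to $v:=B_r\cdots B_1\ket{\varphi}$ via a depth-first recursive evaluation that never materializes an intermediate vector. To query the $\ell$-th entry, I would use the identity
\[
    v[\ell] \;=\; \sum_{j\,:\,B_r[\ell,j]\neq 0} B_r[\ell,j]\cdot (B_{r-1}\cdots B_1\ket{\varphi})[j]
\]
and iterate over the (at most) $s$ nonzero entries of the $\ell$-th row of $B_r$, which can be enumerated in $\poly(n)$ time by Items 3 and 4 of \cref{def:ma}. For each such $j$, the factor $(B_{r-1}\cdots B_1\ket{\varphi})[j]$ is produced by a recursive call of the same procedure applied to the shorter product $B_{r-1}\cdots B_1$; at the base of the recursion, we directly query $\ket{\varphi}$. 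Correctness follows immediately by induction on $r$.

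The time complexity $T(r)$ of a single query satisfies $T(r)\le s\,T(r-1)+\poly(n)$ with $T(0)=\poly(n)$, yielding $T(r)=\myO{s^r}$, as required. The encoding length bound is also quick: each entry of $B_r\cdots B_1\ket{\varphi}$ is a sum of at most $s^r$ products of $r$ matrix entries (each in $\Comp[\poly(n)]$) and one entry of $\ket{\varphi}$ (also in $\Comp[\poly(n)]$), so it lies in $\Comp[O(r\cdot\poly(n))]$, giving $\len{v}=\myO{r}$.

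The main subtlety, and what I expect to be the main obstacle, is the space bound. The key observation is that because the recursion is executed depth-first, only a single leaf-to-root branch of the call tree is simultaneously active. At recursion depth $d$ (counting the outermost call as depth $0$ and a direct query to $\ket{\varphi}$ as depth $r$), the active frame is computing an entry of $B_{r-d}\cdots B_1\ket{\varphi}$, whose entries have encoding length $\myO{r-d}$ by the argument above. Each active frame therefore needs to store only: a row index and a loop counter over the at most $s$ nonzero positions ($\poly(n)$ bits each); the current matrix entry $B_{r-d}[\ell,j]$ being multiplied ($\poly(n)$ bits); and a running partial sum, which, being a partial sum of entries of $B_{r-d}\cdots B_1\ket{\varphi}$, has bit length $\myO{r-d}$. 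Crucially, when a recursive sub-call returns, only its scalar output is retained; its local state is discarded. Summing across all $r+1$ depths gives total space $\sum_{d=0}^{r}\myO{r-d}=\myO{r^2}$, matching the bound in the statement.
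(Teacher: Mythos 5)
Your proposal is correct and follows essentially the same approach as the paper: a depth-first recursive evaluation of $\bra{\ell}B_r\cdots B_1\ket{\varphi}$ over the at most $s$ nonzero row entries, with the same $T(r)\le s\,T(r-1)+\myO{1}$ time recursion, the same encoding-length argument, and the same per-frame accounting (index, counter, matrix entry, partial sum) summed over depth $r$ to get $\myO{r^2}$ space. Your depth-dependent refinement ($\myO{r-d}$ bits at depth $d$) is a minor sharpening of the paper's uniform $\myO{r}$ per level and yields the same bound.
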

\begin{proof}
	Here is the main idea: to obtain the $\ell$-th entry of $B_r\cdots B_1\ket{\varphi}$, we only need to know the~$s$ nonzero entries of the $\ell$-th row of $B_r$, which can be queried directly, together with the corresponding entries in the vector $B_{r-1}\cdots B_{1}\ket{\varphi}$, which can be computed recursively. The algorithm is described in pseudocode below. 
	
	\begin{center}
		\begin{minipage}{14 cm} \vspace{2mm}
		\begin{algorithm}[H]
			\SetAlgorithmName{}{}{List of Algorithms}
			\nonl \hspace{-4mm}Algorithm $\Aa(\ell,r)$ 
			\hspace{4mm}\tcp*[l]{computes $\bra{\ell}B_r\cdots B_1 \ket{\varphi}$}
	
			\uIf{$r=0$}
				{{\bf return} $\braket{\ell}{\varphi}$\,;\label{step:phi}}
			\Else
				{
				$z\gets 0$\,;
	
				get the number of nonzero entries of the row $B_r[\ell,\cdot]$ and write it $s'$\,;
	
				\For{$t$ {\bf from }$1$ \KwTo $s'$\label{step:forbegin}}
					{
					get the index of the $t$-th nonzero entry of $B_r[\ell,\cdot]$ and write it $j$\,;
	
					$x\gets B_r[\ell,j]$\,;\label{step:x}
					\hspace{11mm}\tcp*[l]{queries $\bra{\ell}B_{r}\ket{j}$}
	
					$y\gets \Aa\left(j,r-1\right)$\,;
					\hspace{4mm}\tcp*[l]{computes recursively $\bra{j}B_{r-1}\cdots B_1\ket{\varphi}$}
					\label{step:y}
	
					$z\gets z+ x \cdot y$\,;\label{step:z}
					}
				{\bf return} $z$\,;
				}
			\end{algorithm}
		\end{minipage}
	\end{center}
	
	We first analyze the correctness of the algorithm. Let $j_1,\ldots,j_{s'}$ represent the indices of the nonzero entries of the row $B_r[\ell,\cdot]$. Since
	\[
		\bra{\ell}B_r\cdots B_1 \ket{\varphi} =\sum_{t=1}^{s'} \bra{\ell}B_r\ket{j_t} \bra{j_t}B_{r-1}\cdots B_1\ket{\varphi},
	\]
	Algorithm $\Aa(\ell,r)$ outputs $\bra{\ell} B_r\cdots B_1 \ket{\varphi}$. 
	
	Let $\Tt(r)$ denote the running time of this procedure. 
	We have
	\[
	\Tt(r)\le s\Tt(r-1)+\myO{s},
	\]
	and thus $\Tt(r)=\myO{s^r}$. 
	For each $j\in[N]$, the entry $\braket{j}{\psi}$ has a $\poly(n)$-bit binary expansion. Each entry of the matrices $B_1,\ldots,B_r$ also has a $\poly(n)$-bit binary expansion. This implies that $\len{B_r\cdots B_1 \ket{\varphi}}=\myO{r}$.
	We finally consider the space complexity. The recursion tree has depth~$r$. At each level of the recursion, 
	the values $x$, $y$ and $z$ at Steps~\ref{step:x}, \ref{step:y} and \ref{step:z} can be stored in $\myO{r}$ bits, and we need one~$O(\log s)$-bit counter for storing the current value of~$t$. The overall space complexity of the algorithm is thus $\myO{r(r+\log s)}=\myO{r^2}$.
	\end{proof}

\cref{th:IMM} is obtained by applying \cref{lemma:Tang} to the vector $w=B_r\cdots B_1\ket{\varphi}$, for which we can implement query access from \cref{lemma2}:
\begin{proof}[Proof of \cref{th:IMM}]
	From \cref{lemma2} we have query access to the vector $w=B_r\cdots B_1\ket{\varphi}$ with encoding length $\len{w}=\myO{r}$ and costs $\qt{w}=\myO{s^{r}}$ and $\qs{w}=\myO{r^2}$. 
	Using \cref{lemma:Tang}, we can then compute an estimate $a\in\Comp$ such that
	\[
		\abs{a-\bra{\psi}B_r\cdots B_1\ket{\varphi}}
		\le \varepsilon\, \norm{B_r\cdots B_1\ket{\varphi}}
		\le \varepsilon\, \norm{B_1}\cdots \norm{B_r}
	\]
	holds with probability at least $1-\delta$.
	The time and space complexity are 
	\[\myO{s^r\varepsilon^{-2}\log(1/\delta)}\] and \[\myO{r^2+\big(r+\log(1/\varepsilon)\big)\log(1/\delta)},\] respectively.
\end{proof}
\section{Polynomial Transformations of Decomposable Matrices}\label{sec:PT}
In this section we show how to classically estimate the inner product $\bra{\psi}P(A)\ket{\varphi}$ for a matrix $A$ with an $s$-decomposition, a polynomial $P$, and two quantum states $\ket{\psi}$ and $\ket{\varphi}$ to which we have classical access. More precisely, we consider the following problem.

\begin{center}
	\ovalbox{
	\begin{minipage}{13.0 cm} \vspace{2mm}
	
	\noindent\hspace{3mm}$\EST(s,d,\eta)$ \hspace{5mm}{\tt(Estimation of Polynomial Transformation)}\\\vspace{-3mm}
	
	\noindent\hspace{3mm} Input: $\bul$ a matrix $A\in\Comp^{N\times N}$ with an $s$-decomposition  

	\noindent\hspace{15mm}
	$\bul$ a polynomial $P\in\Real[x]$ of degree $d$ with $|P(x)|\le 1$ $~\forall x\in[-1,1]$

	\noindent\hspace{15mm}
	$\bul$ query access to quantum state $\ket{\varphi}\in \Comp^{N}$
	
	\noindent\hspace{15mm}
	$\bul$ sample-and-query access to a quantum state $\ket{\psi}\in \Comp^{N}$
	\vspace{2mm}
	
	\noindent\hspace{3mm} Output: an estimate $\hat{E}\in \Comp$ such that 
	\begin{equation}\label{eq:PTcond}
		\abs{\hat E-\bra{\psi} P(A)\ket{\varphi}}\le \eta
	\end{equation}

	\noindent\hspace{25mm}
	holds with probability at least $1-1/\exp(n)$
	\vspace{2mm}
	\end{minipage}
	}
	\end{center}
	\vspace{2mm}

	Here is the main result of this section.
\begin{proposition}\label{th:PT}
	For any $s\ge 2$, any $d\ge 1$ and any $\eta\in(0,1]\cap\Real[\poly(n)]$, the problem $\EST(s,d,\eta)$ can be solved classically in time
	\[
		\myO{s^{c\cdot d} \eta^{-4}}
	\] 
	time, for some universal constant $c>0$, and space $\myO{d^2}$\,.
\end{proposition}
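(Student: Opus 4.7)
The plan is to expand $P(x)=\sum_{r=0}^d a_r x^r$ so that
\[
	\bra{\psi}P(A)\ket{\varphi}=\sum_{r=0}^d a_r\, \bra{\psi}A^r\ket{\varphi},
\]
and estimate each scalar $\bra{\psi}A^r\ket{\varphi}$ separately. By \cref{lemma:poly} we have $\sum_{r=0}^d |a_r|\le 4^d$, so it suffices to compute each $\bra{\psi}A^r\ket{\varphi}$ within additive error $\eta/4^d$, and a triangle inequality yields the required precision $\eta$. Note that $\sum_i\upw_i=1$ gives $\norm{A}\le 1$, hence each inner product $\bra{\psi}A^r\ket{\varphi}$ has modulus at most $1$.

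For each $r$ I use the $s$-decomposition to write
\[
	\bra{\psi}A^r\ket{\varphi}=\sum_{x\in[m]^r}\bra{\psi}A_{x_1}\cdots A_{x_r}\ket{\varphi},
\]
and apply importance sampling with the product distribution $\tilde q(x)=\upw_{x_1}\cdots\upw_{x_r}$, which is a probability distribution on $[m]^r$ since $\sum_i\upw_i=1$, and which can be sampled by $r$ independent draws from $\{\upw_i\}_i$. The random variable
\[
	Y(x)=\frac{\bra{\psi}A_{x_1}\cdots A_{x_r}\ket{\varphi}}{\tilde q(x)}
\]
is an unbiased estimator of $\bra{\psi}A^r\ket{\varphi}$, and using $\norm{A_i}\le\upw_i$ one checks $|Y(x)|\le \norm{A_{x_1}}\cdots\norm{A_{x_r}}/\tilde q(x)\le 1$. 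Because $|Y|\le 1$, applying Hoeffding's inequality separately to real and imaginary parts shows that averaging $T=O(n\cdot 16^d/\eta^2)$ independent samples of $Y$ yields an estimator with additive error at most $\eta/(2\cdot 4^d)$ except with probability $1/\exp(\Omega(n))$.

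Since $Y(x)$ cannot be computed exactly, for each sampled $x$ I invoke \cref{th:IMM} on the $s$-sparse matrices $A_{x_1},\ldots,A_{x_r}$ with target accuracy $\epsilon'=\Theta(\eta/4^d)$ and failure probability $1/\exp(\Omega(n))$. This produces an estimate $\widetilde{\bra{\psi}A_{x_1}\cdots A_{x_r}\ket{\varphi}}$ which, after dividing by $\tilde q(x)$, yields a computable $\widetilde Y(x)$ satisfying $|\widetilde Y(x)-Y(x)|\le\epsilon'\norm{A_{x_1}}\cdots\norm{A_{x_r}}/\tilde q(x)\le\epsilon'$. Combining the Hoeffding bound with the uniform bias bound $\epsilon'$, and union-bounding over the $O(Td)$ calls to \cref{th:IMM} and the $d+1$ values of $r$, the overall estimate $\widehat E=\sum_r a_r \widehat E_r$ achieves $|\widehat E-\bra{\psi}P(A)\ket{\varphi}|\le\eta$ with probability at least $1-1/\exp(n)$.

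For the complexity, each call to \cref{th:IMM} costs $\myO{s^r(\epsilon')^{-2}\cdot n}=\myO{s^r\cdot 16^d/\eta^2}$; multiplying by $T$ and summing over $r\in\{0,\ldots,d\}$ yields total time $\myO{s^d\cdot 16^{2d}/\eta^4}$. Using $s\ge 2$ to bound $16^{2d}=2^{8d}\le s^{8d}$, this simplifies to $\myO{s^{cd}/\eta^4}$ for a universal constant $c$ (e.g., $c=9$). The space is dominated by \cref{th:IMM}, giving $\myO{r^2+(r+\log(1/\epsilon'))\log(1/\delta)}=\myO{d^2}$, since samples can be folded into a running sum and discarded. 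The main subtlety I anticipate is the delicate error-budget allocation: the IMM accuracy $\epsilon'$ and the Hoeffding sample count $T$ each contribute a factor of $\eta^{-2}$, so they must be chosen with care to yield the stated $\eta^{-4}$ (and not worse). The key enabling fact is that the importance-sampling weights $\upw_{x_i}$ dominate the operator-norm factors $\norm{A_{x_i}}$, which yields the crucial uniform bound $|Y|\le 1$ that keeps the variance and the post-division bias simultaneously controlled.
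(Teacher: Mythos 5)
Your proposal is correct and follows essentially the same route as the paper: expand $P(x)=\sum_r a_r x^r$, use \cref{lemma:poly} to budget error $\eta/4^d$ per power, estimate $\bra{\psi}A^r\ket{\varphi}$ by importance sampling over $[m]^r$ with the product distribution $\upw_{x_1}\cdots\upw_{x_r}$, and approximate each sampled term via \cref{th:IMM}, yielding the same $\myO{s^{cd}\eta^{-4}}$ time and $\myO{d^2}$ space. The only deviation is cosmetic --- you exploit the uniform bound $\abs{Y}\le 1$ with Hoeffding where the paper bounds the second moment, applies Chebyshev, and boosts with the powering lemma (\cref{lemma:powering}); just note that in your union bound the per-call failure probability for \cref{th:IMM} should be taken as roughly $\exp(-n)/(T(d+1))$ rather than a fixed $1/\exp(\Omega(n))$, since $T=O(n\,16^d/\eta^2)$ can exceed $\exp(O(n))$ for tiny $\eta$; this only adds a $\log$ factor $O(d+\log(1/\eta))$ that is absorbed into $\myO{s^{cd}}$.
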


The proof of \cref{th:PT} is based on the following lemma, whose proof is given after the proof of the proposition.

\begin{lemma}\label{th:t1}
	For any $r\in\set{0}{d}$, any $\eta\in(0,1]\cap\Real[\poly(n)]$ and any $\delta\in(0,1]$, there is a classical algorithm that computes an estimate $\hat E_r\in \Comp$ such that 
	\[
		\abs{\hat E_r - \bra{\psi}A^{r}\ket{\varphi}}\le\frac{\eta}{4^d} 
	\] 		
	holds with probability at least $1-\delta$ in time
	\[
		\myO{s^{r} 2^{8d}\eta^{-4}d\log(1/\delta)}
	\]
	and space
	\[
		\myO{d^2+d\log(1/\delta)}.
	\]
\end{lemma}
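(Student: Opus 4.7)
The plan is to combine the sampling idea sketched in the technical overview with the iterated matrix multiplication primitive of \cref{th:IMM}. Write the given decomposition as $A=\sum_{i=1}^m A_i$ with $\|A_i\|\le \upb_i$ and $\sum_i \upb_i\le 1$. (If $\sum_i \upb_i<1$ we append a dummy index $i=0$ with $A_0=0$ and $\upb_0=1-\sum_{i=1}^m\upb_i$, so that $q(i)=\upb_i$ becomes a probability distribution on $\{0,1,\ldots,m\}$.) For $r=0$ the claim reduces to estimating $\braket{\psi}{\varphi}$, which follows directly from \cref{lemma:Tang} applied to $w=\ket{\varphi}$ with accuracy parameter $\eta/4^d$ and failure probability $\delta$; the cost is trivially within the stated bounds. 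So I focus on $r\ge 1$.

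Expand
\[
\bra{\psi}A^r\ket{\varphi}=\sum_{x\in [m]^r}\bra{\psi}A_{x_1}\cdots A_{x_r}\ket{\varphi},
\]
and let $q$ be the product distribution $q(x)=\upb_{x_1}\cdots\upb_{x_r}$ on $[m]^r$. For $x$ drawn from $q$, define $Y_x=\bra{\psi}A_{x_1}\cdots A_{x_r}\ket{\varphi}/q(x)$. Since $\|A_{x_i}\|\le \upb_{x_i}$, submultiplicativity and unit norm of $\ket\psi,\ket\varphi$ give $|Y_x|\le 1$, and $\E_{x\sim q}[Y_x]=\bra{\psi}A^r\ket{\varphi}$ by construction. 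Sampling $x\sim q$ is immediate from independent samples of the $\upb$-distribution over $[m]$, which can be done in $\poly(n)$ time from the stored $\upb_i$'s.

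To obtain an actual estimator, I will replace $Y_x$ by $\tilde Y_x:=\tilde B_x/q(x)$, where $\tilde B_x$ is the output of $\MM(s,r,\varepsilon_1,\delta_1)$ applied to $B_i:=A_{x_i}$, and truncate $\tilde Y_x$ to magnitude at most $2$ (to make Hoeffding rigorously applicable even on the failure event of $\MM$). I will set
\[
\varepsilon_1=\Theta(\eta/4^d),\qquad T=\Theta\bigl(16^{d}\eta^{-2}\log(1/\delta)\bigr),\qquad \delta_1=\delta/(4T),
\]
draw $T$ independent samples $x^{(1)},\dots,x^{(T)}\sim q$, and output the empirical mean $\hat E_r=\frac1T\sum_{j=1}^T \tilde Y_{x^{(j)}}$. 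The accuracy analysis splits into (i) a bias term: on the event that $\MM$ succeeds, $|\tilde Y_x-Y_x|\le \varepsilon_1\|A_{x_1}\|\cdots\|A_{x_r}\|/q(x)\le \varepsilon_1$, and the truncation and $\delta_1$-failure contribute an extra $O(\delta_1)$ shift to $\E[\tilde Y_x]$; and (ii) a Hoeffding deviation term for $T$ i.i.d.\ bounded samples ($|\tilde Y_x|\le 2$ by truncation), applied separately to real and imaginary parts. Choosing constants appropriately in $\varepsilon_1$, $\delta_1$, $T$ makes both contributions at most $\eta/(2\cdot 4^d)$, so by a union bound $|\hat E_r-\bra{\psi}A^r\ket{\varphi}|\le \eta/4^d$ holds with probability at least $1-\delta$.

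For the complexity, each call to $\MM(s,r,\varepsilon_1,\delta_1)$ costs $\tilde O(s^r\varepsilon_1^{-2}\log(1/\delta_1))=\tilde O(s^r\cdot 16^d\eta^{-2}\cdot d\log(1/\delta))$ time and $\tilde O(r^2+(r+\log(1/\varepsilon_1))\log(1/\delta_1))=\tilde O(d^2+d\log(1/\delta))$ space (absorbing $\log(1/\eta)=\poly(n)$ into $\tilde O$). Multiplying the per-sample time by the number of samples $T=\tilde O(16^d\eta^{-2}\log(1/\delta))$ yields the claimed time $\tilde O(s^r\cdot 2^{8d}\eta^{-4}\cdot d\log(1/\delta))$; the space is reused across samples plus a counter and running sum, giving the stated $\tilde O(d^2+d\log(1/\delta))$ space. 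The main obstacle I expect is the joint analysis of bias and variance: naively $\tilde Y_x$ is unbounded on the failure event of $\MM$, which would break Hoeffding; the truncation (together with setting $\delta_1\ll\varepsilon_1$) is what makes the concentration argument clean while keeping the bias under control.
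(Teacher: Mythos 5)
Your proposal follows essentially the same route as the paper's proof: expand $\bra{\psi}A^r\ket{\varphi}$ over the decomposition, importance-sample $x\in[m]^r$ from the product distribution $q(x)=\upb_{x_1}\cdots\upb_{x_r}$, estimate each $\bra{\psi}A_{x_1}\cdots A_{x_r}\ket{\varphi}$ with \cref{th:IMM} at precision $\Theta(\eta/4^d)$ (so that dividing by $q(x)$ keeps the per-sample error bounded), and average. The only real difference is the concentration bookkeeping: the paper runs a batch of $t=\ceil{64\cdot 4^{2d}/\eta^2}$ samples with each $\MM$ call at failure probability $1/(8t)$, applies Chebyshev to get a constant-success estimate, and boosts to $1-\delta$ via the median trick of \cref{lemma:powering}, whereas you run a single Hoeffding batch of $T=\Theta(16^d\eta^{-2}\log(1/\delta))$ truncated samples and absorb the $\MM$ failures into a bias term. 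Your accuracy analysis is sound (the truncation correctly makes Hoeffding applicable and the bias $\varepsilon_1+O(\delta_1)$ is under control), and your $r=0$ and dummy-index remarks are harmless (note \cref{def:decomp} already fixes $\sum_i\upb_i=1$). One quantitative slip: with $\delta_1=\delta/(4T)$ each $\MM$ call costs $\myO{s^r16^d\eta^{-2}\big(d+\log(1/\delta)\big)}$, so multiplying by $T$ gives $\myO{s^r2^{8d}\eta^{-4}\big(d\log(1/\delta)+\log^2(1/\delta)\big)}$, which overshoots the stated $\myO{s^r2^{8d}\eta^{-4}d\log(1/\delta)}$ when $\log(1/\delta)$ is much larger than $d\cdot\poly(n)$ (irrelevant in the paper's application where $\delta=1/\exp(n)$, but the lemma is claimed for all $\delta$). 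The fix is already implicit in your truncation: since failures only enter through the bias, you may take $\delta_1=\Theta(\eta/4^d)$ independent of $\delta$ and $T$, which keeps the bias at $O(\eta/4^d)$ and restores the claimed time bound; alternatively, follow the paper and boost a constant-success estimator with \cref{lemma:powering}.
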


\begin{proof}[Proof of \cref{th:PT}]
	Let us write the polynomial $P$ as \[P(x)=\sum_{r=0}^d a_r x^d.\] 
	For any $\delta'\in(0,1]$, we describe how to compute an estimate $\hat E$ such that \cref{eq:PTcond} holds with probability at least $1-\delta'$. Taking $\delta'=1/\exp(n)$ then proves the proposition. 

	For each $r\in\{0,\ldots,d\}$ such that $a_r\neq 0$, we apply \cref{th:t1} with $\delta=\frac{\delta'}{d+1}$ to obtain an approximation $\hat E_r$ of $\bra{\psi}A^{r}\ket{\varphi}$ such that 
	\[
		\Pr\left[\abs{\hat E_r - \bra{\psi}A^{r}\ket{\varphi}}\le \frac{\eta}{4^d}\right]\ge 1-\frac{\delta'}{d+1}.
	\]
	For each $r\in\{0,\ldots,d\}$ such that $a_r= 0$, we set $\hat E_r=0$.
	We then output \[\hat E=\sum_{r=0}^d a_r\hat E_r.\] From the union bound and the triangle inequality, with probability at least $1-\delta'$ we have
	\[
		\abs{\hat E - \bra{\psi}P(A)\ket{\varphi}}\le
		\sum_{r=0}^d\abs{a_r}\abs{E_r - \bra{\psi}A^{r}\ket{\varphi}}\le \eta,
	\]
	where we used \cref{lemma:poly} to derive the last inequality.

	The time complexity is
	\begin{align*}
		\myO{\left(\sum_{r=0}^d s^{r}\right)2^{8d} \eta^{-4}d\log\left(\frac{d}{\delta'}\right)}
		&=\myO{s^d2^{8d} \eta^{-4}d\log\left(\frac{d}{\delta'}\right)}\\
		&=\myO{s^{c\cdot d} \eta^{-4}},
	\end{align*}
	for some universal constant $c>0$.
	The space complexity is $\myO{d^2+d\log(d/\delta')}=\myO{d^2}$.
\end{proof}

\begin{proof}[Proof of \cref{th:t1}]
	As in \cref{def:decomp}, we write the $s$-decomposition of $A$ as
	\[
		A=\sum_{i=1}^m A_i\,,  
	\] 
	where each $A_i$ is an $s$-sparse matrix such that $\norm{A_i}\le\upw_i$, with 
	\begin{equation}\label{eq:norm}
	\upw_1+\cdots+\upw_m = 1.
	\end{equation}
	Consider the probability distribution $p\colon[m]^r\to[0,1]$ defined 
	as \[p(x)=\upw_{x_1}\cdots\upw_{x_r}\] for any $x\in[m]^r$ (\cref{eq:norm} guarantees that this is a probability distribution). 
	Define a random variable $X$ as follows: sample a vector $x$ from the distribution $p$, 
	and set 
	\[
	X=\frac{\bra{\psi}A_{x_1}\cdots A_{x_r}\ket{\varphi}}{p(x)}.
	\]
	Repeat the above procedure $t=\ceil{64\cdot4^{2^d}/\eta^2}$ times and output the mean. Let $Y$ denote the corresponding complex random variable. 
	We have 
	\begin{align*}
		\ex{Y}&=\ex{X}=
		\sum_{x\in[m]^r}\bra{\psi}A_{x_1}\cdots A_{x_r}\ket{\varphi}
		= 
		\bra{\psi}A^r\ket{\varphi}
	\end{align*}
	and
	\begin{align*}
		\var{Y}
		&\le
		\frac{1}{t}
		\ex{\abs{X}^2}\\
		&=
		\frac{1}{t}\sum_{x\in[m]^r}\frac{\abs{\bra{\psi}A_{x_1}\cdots A_{x_r}\ket{\varphi}}^2}{\upw_{x_1}\cdots\upw_{x_r}}\\
		&\le
		\frac{1}{t}\sum_{x\in[m]^r}\frac{\norm{A_{x_1}\cdots A_{x_r}}^2}{\upw_{x_1}\cdots\upw_{x_r}}\\
		&\le
		\frac{1}{t}\sum_{x\in[m]^r}\upw_{x_1}\cdots\upw_{x_r}\\
		&=
		\frac{1}{t}\left(\upw_1+\cdots+\upw_m\right)^r\\
		&=
		\frac{1}{t}\,.
	\end{align*}
	From Chebyshev's inequality, we thus obtain:
	\begin{equation}\label{eq:ub1}
	\pr{\abs{Y-\bra{\psi}A^r\ket{\varphi}}\ge \frac{\eta}{2\sqrt{2}\cdot 4^d}}\le\frac{8\cdot 4^{2d}}{\eta^2 t}\le\frac{1}{8}.
	\end{equation}

	We cannot directly use this strategy since we do not know $\bra{\psi}A_{x_1}\cdots A_{x_r}\ket{\varphi}$. Instead, we estimate this quantity using \cref{th:IMM}. This leads to the following algorithm.
	\begin{center}
		\begin{minipage}{13.3 cm} \vspace{2mm}
		\begin{algorithm}[H]
			\SetAlgorithmName{}{}{List of Algorithms}
			\nonl \hspace{-4mm}Algorithm $\Bb(\eta)$ 
			\hspace{4mm}\tcp*[l]{estimates $\bra{\psi}A^r \ket{\varphi}$ with precision $\frac{\eta}{\sqrt{2}\cdot4^d}$}
			
			$t\gets\ceil{64\cdot 4^{2d}/\eta^2}$;

			$z\gets 0$\,;
			
			\For{$i$ {\bf from }$1$ \KwTo $t$}
					{
					Take a vector $x$ according to the distribution $p$.\label{step:choice}

					Use \cref{th:IMM} for the problem $\MM\big(s,r,\frac{\eta}{2\sqrt{2}\cdot 4^d},\frac{1}{8t}\big)$ to
					compute an estimate $\alpha\in \Comp$ of $\bra{\psi}A_{x_1}\cdots A_{x_r}\ket{\varphi}$\,;\label{step:alpha}

					$z\gets z+\frac{\alpha}{t\cdot p(x)}$\,;
					}
				{\bf return} $z$\,;\label{step:output}
				
			\end{algorithm}
		\end{minipage}
	\end{center}

	The complexity of Algorithm $\Bb(\eta)$ is dominated by the computation at Step \ref{step:alpha}, which is done $t$ times. 
	From \cref{th:IMM}, we obtain the upper bounds
	\[
		\myO{t\cdot s^{r} 2^{4d}\eta^{-2}\log(8t)}
		=
		\myO{s^{r} 2^{8d}\eta^{-4}d}
	\]
	and 
	\[
		\myO{r^2+\left(r+\log\left(\frac{2\sqrt{2}\cdot 4^d}{\eta}\right)\right)\log(8t)}=\myO{r^2+d^2}=\myO{d^2}
	\]
	on the time and space complexities, respectively.

	We now analyze the correctness of Algorithm $\Bb(\eta)$.
	Let $Z$ be the random variable corresponding to the output of Step \ref{step:output} when at Step \ref{step:choice} the vectors $x$'s are the same vectors as in the random variable $Y$.
		For any choice of $x$ at Step \ref{step:choice}, the estimate $\alpha$ of Step \ref{step:alpha} satisfies
		\begin{equation}\label{eq:beta}
		\abs{\alpha-\bra{\psi}A_{x_1}\cdots A_{x_r}\ket{\varphi}}\le \frac{\eta\,\upw_{x_1}\cdots\upw_{x_r}}{2\sqrt{2}\cdot 4^d}
	\end{equation}
	with probability at least $1-1/(8t)$.
		Under the condition that Inequality (\ref{eq:beta}) is always satisfied during the $t$ repetitions, we have  
	\[
		|Z-Y|\le\sum_{x}\frac{1}{t\cdot p(x)}\frac{\eta\,\upw_{x_1}\cdots\upw_{x_r}}{2\sqrt{2}\cdot 4^d}
		=
		\frac{\eta}{2\sqrt{2}\cdot 4^d},
	\]
	where the sum is over the $t$ vectors $x$ chosen at Step \ref{step:choice}.
	By the union bound, we thus have 
	\begin{equation}\label{eq:ub2}
		\Pr\left[|Z-Y|>\frac{\eta}{2\sqrt{2}\cdot 4^d}\right]\le \frac{1}{8}.
	\end{equation}

	Combining \cref{eq:ub1} and \cref{eq:ub2} gives
	\begin{align*}
		\Pr\left[\abs{Z-\bra{\psi}A^r\ket{\varphi}}\le \frac{\eta}{\sqrt{2}\cdot 4^d}\right]
		&\ge
		\Pr\left[|Z-Y|\le\frac{\eta}{2\sqrt{2}\cdot 4^d}\:\:\textrm{ and }\:\abs{Y-\bra{\psi}A^r\ket{\varphi}}\le \frac{\eta}{2\sqrt{2}\cdot 4^d}\right]\\
		&\ge 
		1-
		\Pr\left[|Z-Y|>\frac{\eta}{2\sqrt{2}\cdot 4^d}\right]
		- 
		\Pr\left[\abs{Y-\bra{\psi}A^r\ket{\varphi}}> \frac{\eta}{2\sqrt{2}\cdot 4^d}\right]\\
		&\ge  \frac{3}{4}.
	\end{align*}
	We can then use Lemma \ref{lemma:powering} to obtain an estimate $\hat E_r\in \Comp$ such that 
	\[	
		\abs{\hat E_r - \bra{\psi}A^{r}\ket{\varphi}}\le\frac{\eta}{4^d} 
	\] 		
	holds with probability at least $1-\delta$. Lemma \ref{lemma:powering} introduces a $\log(1/\delta)$ factor in the time complexity and an additive $\myO{d\log(1/\delta)}$ term in the space complexity since for the computation of the medians we need to store $O(\log(1/\delta))$ values, each requiring $\myO{d}$
	bits.\footnote{Here we are using the assumption $\kappa_i\in\Comp[\poly(n)]$ for all $i\in[m]$, see \cref{def:decomp}. This implies that $p(x)$ can be encoded in $\myO{r}$ bits and the output of Procedure $\Bb(\eta)$ in $\myO{d}$ bits.} 
\end{proof}
\section{Eigenvalue Estimation}\label{sec:EV}
In this section we use the results proved in Section \ref{sec:PT} to estimate the smallest eigenvalue of a normal matrix. 
We describe the most general problem we are solving in Section \ref{sub:EV1} and then prove Theorems \ref{th:main-informal} and \ref{cor:main-informal} in Section~\ref{sub:EV2}.

\subsection{General result}\label{sub:EV1}
We consider the problem of estimating the smallest eigenvalue of a normal matrix with an $(s,\upw)$-decomposition, given classical access to a guiding state. Here is the formal description\footnote{
	The choice of $\frac{\varepsilon}{2}\upw$ in the overlap is somehow arbitrary: we could have chosen $\sigma\upw$ for any $\sigma\in[0,\varepsilon)$ instead.  
	The exponent in the complexity of \cref{th:EV} would then become 
	$O\left(\frac{\log(1/\chi)}{\varepsilon-\sigma}\right)$ instead of $O\left(\frac{\log(1/\chi)}{\varepsilon}\right)$. 
} of the problem:

\begin{center}
	\ovalbox{
	\begin{minipage}{15.2 cm} \vspace{2mm}
	
	\noindent\hspace{3mm}$\EV(s,\chi,\varepsilon)$ \hspace{5mm}{\tt(Estimation of the Smallest Eigenvalue)}\\\vspace{-3mm}
	
	\noindent\hspace{3mm} Input: $\bul$ a normal matrix $A\in\Comp^{N\times N}$ with an $(s,\upw)$-decomposition (for any $\kappa)$ 
	\vspace{1mm}

	\noindent\hspace{15mm}
	$\bul$ sample-and-query access to a quantum state $\ket{\psi}\in \Comp^{N}$ with $\overlapf{A,\ket{\psi}}{\frac{\varepsilon}{2}\kappa}\ge\chi$

	\vspace{3mm}
	
	\noindent\hspace{3mm} Output: an estimate $E^\ast\in \Real$ such that 
	\begin{equation}\label{eq:EVcond}
		\abs{E^\ast-\en{A}}\le \varepsilon \, \upw
	\end{equation}

	\noindent\hspace{25mm}
	holds with probability at least $1-1/\exp(n)$
	\vspace{2mm}
	\end{minipage}
	}
\end{center}

We prove the following theorem:
\begin{theorem}\label{th:EV}
	For any $s\ge 2$ and any $\varepsilon,\chi\in(0,1]\cap\Real[\poly(n)]$, 
	the problem $\EV(s,\chi,\varepsilon)$ can be solved classically in time
	\[
		\myO{s^{\frac{c'\log(1/\chi)}{\varepsilon}}}
	\]
	time, for some universal constant $c'>0$, and space $\myO{1/\varepsilon^2}$.
\end{theorem}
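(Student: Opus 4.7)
The plan is to reduce eigenvalue estimation to $O(1/\varepsilon)$ decision subproblems and solve each by running Proposition~\ref{th:PT} on a carefully chosen polynomial. After replacing $A$ by $A/\upw$, which converts the $(s,\upw)$-decomposition into an $s$-decomposition, places the (necessarily real) eigenvalues of $A$ in $[-1,1]$, and rescales the goal to additive error $\varepsilon$ on $\en{A}$, I would sweep $O(1/\varepsilon)$ thresholds $t_j$ uniformly spaced in $[-1,1]$ and, for each, run a decision routine $\Dd_j$ that distinguishes (i) $\en{A}\le t_j-c_1\varepsilon$ from (ii) $\en{A}>t_j+c_1\varepsilon$ for a suitable constant $c_1$; reading off the index at which the outputs flip then pins down $\en{A}$ to within~$\varepsilon$. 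Amplifying each $\Dd_j$ to failure probability $2^{-\poly(n)}$ and union-bounding over the $O(1/\varepsilon)\le 2^{\poly(n)}$ calls preserves the required $1/\exp(n)$ overall error.

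The subroutine $\Dd_j$ would apply Lemma~\ref{lemma:poly-cons} (after an affine change of variable mapping $[-1,1]\to[0,1]$ to match its normalization) with parameters tuned so that, writing $a=t_j-c_1\varepsilon$ and $b=t_j+c_1\varepsilon$, the resulting polynomial $P$ has degree $d=O(\log(1/\xi)/\varepsilon)$, satisfies $|P(x)|\le 1$ on $[-1,1]$, $P(\lambda)\in[1-\xi,1]$ on $[-1,a+\varepsilon/2]$, and $P(\lambda)\in[0,\xi]$ on $[b,1]$ (this requires $c_1$ large enough that $b-(a+\varepsilon/2)=\Theta(\varepsilon)$). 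Because $P$ is not guaranteed to be non-negative on its transition region $(a+\varepsilon/2,b)$, I would work with the square $Q=P^2$, which lies in $[0,1]$ everywhere on $[-1,1]$, is $\ge(1-\xi)^2$ on the low-energy window $[-1,a+\varepsilon/2]$, and is $\le \xi^2$ on $[b,1]$. Combined with the overlap assumption $\overlapf{A,\ket{\psi}}{\varepsilon/2}\ge\chi$ (whose support in case (i) is contained in $[\en{A},\en{A}+\varepsilon/2]\subseteq[-1,a+\varepsilon/2]$), a direct spectral calculation yields $\bra{\psi}Q(A)\ket{\psi}\ge(1-\xi)^2\chi^2$ in case (i) and $\bra{\psi}Q(A)\ket{\psi}\le\xi^2$ in case (ii). Taking $\xi=\chi^2/10$ gives these two regimes a multiplicative gap of $\Omega(\chi^2)$.

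I would then feed $Q$ (degree $2d$, bounded by $1$ in magnitude on $[-1,1]$) into Proposition~\ref{th:PT} with $\ket{\varphi}=\ket{\psi}$ and target precision $\eta=\chi^2/4$, which is enough to tell the two cases apart. The per-call cost is $\myO{s^{c\cdot 2d}\eta^{-4}}=\myO{s^{O(\log(1/\chi)/\varepsilon)}}$, with the $\chi^{-8}$ factor and the outer $1/\varepsilon$ from the sweep both folded into $s^{O(\log(1/\chi)/\varepsilon)}$ (using $s\ge 2$ and $1/\varepsilon\le 2^{1/\varepsilon}\le s^{1/\varepsilon}$), giving the claimed time $\myO{s^{c'\log(1/\chi)/\varepsilon}}$. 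The space cost is $\myO{(2d)^2}=\myO{\log^2(1/\chi)/\varepsilon^2}=\myO{1/\varepsilon^2}$, since $\log(1/\chi)\le\poly(n)$ is absorbed by $\myO{\cdot}$. The main obstacle I anticipate is exactly the lack of sign control for the filter polynomial on its transition region: without care, a negative contribution of size $O(1)$ there could mask the $\chi^2$ signal, but squaring $P$ cleanly removes this at the price of merely doubling the degree, which changes only the hidden constant in the exponent.
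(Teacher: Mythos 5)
Your proposal is correct and follows essentially the same route as the paper's proof: a sweep of $O(1/\varepsilon)$ threshold tests, each built from the rectangle-approximating polynomial of Lemma~\ref{lemma:poly-cons} and decided by estimating $\bra{\psi}P(\cdot)\ket{\psi}$ via Proposition~\ref{th:PT} to precision $\Theta(\chi^2)$, with the final estimate read off from where the answers flip. The only deviations are cosmetic: you rescale $A/\upw$ to $[-1,1]$ and compose the polynomial with the affine map (keeping sparsity $s$) where the paper shifts the matrix to $\frac{1}{2}(I+A/\upw)$ (sparsity $s+1$), and you square the filter polynomial to force non-negativity on the transition region --- a robustness step the paper handles implicitly --- which merely doubles the degree and leaves the stated time and space bounds unchanged.
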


\begin{proof}[Proof of \cref{th:EV}]
	Let us define $A'=\frac{1}{2}(I+\frac{A}{\upw})$ and write $T=\ceil{4/\varepsilon}$. The $(s,\upw)$-decomposition of $A$ gives an $(s+1)$-decomposition of $A'$.
	Observe that $A'$ has eigenvalues in the interval $[0,1]$. 
	The main idea is to divide this interval into $T$ subintervals of length at most $\varepsilon/4$ and find in which subinterval $\en{A'}$ lies in. Since $\en{A}=2\upw\en{A'}-\upw$, this will give an estimate of $\en{A}$.

	Concretely, for any $t\in\set{0}{T-1}$, we consider the following test that checks if $\en{A'}$ is ``approximately'' smaller than  $t\frac{\varepsilon}{4}$. The approximation comes from the use of an estimator at Step~\ref{step:est} --- details of the implementation of this step are discussed later. 
	\begin{center}
		\begin{minipage}{13.5 cm} \vspace{2mm}
		\begin{algorithm}[H]
			\setstretch{1.35}
			\SetAlgorithmName{}{}{List of Algorithms}
			\nonl \hspace{-4mm}$\Test(t)$ 
			\hspace{2mm}\tcp*[l]{checks if $\en{A'}$ is (approximately) smaller than  $t\frac{\varepsilon}{4}$}

					Let $P$ be the polynomial of \cref{lemma:poly-cons} with 
					$\tau=t\frac{\varepsilon}{4}$,
					$\theta=\frac{\varepsilon}{4}$
					and
					$\xi=\frac{\chi^2}{12}$\,;\label{step:P}

					Compute an estimate $\hat{E}\in \Comp$ such that 
					$\abs{\hat E-\bra{\psi} P(A')\ket{\psi}}\le \frac{\chi^2}{4}$\,;\label{step:est}

					\lIf{$\abs{\hat E}\ge \frac{\chi^2}{2}$}{output ``yes''\,;}

					\lElse{output ``no''\,;}
		\end{algorithm}
		\end{minipage}
	\end{center}

	Let $t^\ast$ be the smallest value of $t\in\set{0}{T-1}$ such that $\Test(t)$ outputs ``yes''. Define 
	\[
		E^\ast = t^\ast\frac{\varepsilon}{2}\upw-\upw.
	\]
	The following claim, whose proof is given after the proof of this theorem, guarantees that $E^\ast$ is a correct estimate of $\en{A}$.
	\begin{claim}\label{claim}
			$\abs{E^\ast-\en{A}}\le\varepsilon\,\upw$\,.
	\end{claim}	

	We now discuss the implementation of Step \ref{step:est}. For any $\delta\in(0,1]$, we describe how to compute an estimate $E^\ast$ such that \cref{eq:EVcond} holds with probability at least $1-\delta$. Taking $\delta=1/\exp(n)$ then proves the theorem. 
	
	We use the algorithm of \cref{th:PT} for the problem
	$\EST(s,\deg(P),\chi^2/4)$ in order to obtain
	an estimator $\hat E$ such that
	\[
		\abs{\hat E-\bra{\psi} P(A')\ket{\psi}}\le \chi^2/4
	\]
	holds with probability at least $1-\delta/T$. Since $\Test(t)$ is called at most $T$ times, the union bound guarantees that with probability at least $1-\delta$ no error occur during these tests. This implies that the output $E^\ast$ satisfies the bound of \cref{claim} with probability at least $1-\delta$.

	The overall time complexity is 
	\[
		\myO{T\cdot (s+1)^{c\cdot \deg(P)} \chi^{-8}}
		=
		\myO{s^{\frac{c' \log(1/\chi)}{\varepsilon}}}
		,
	\] 
	for some universal constant $c'>0$. Since $\deg(P)=\myO{1/\varepsilon}$,
	the space complexity is $\myO{1/\varepsilon^2}$.
\end{proof}
\begin{proof}[Proof of \cref{claim}]
	We first analyze the behavior of the procedure $\Test(t)$ 
	for any $t\in\set{0}{T-1}$. Let $P$ and $\tau,\theta,\xi$ be the polynomial and the parameters considered at Step \ref{step:P}. Let us write 
	\[
		A=\sum_{i=1}^N\lambda_i\ket{u_i}\bra{u_i}
	\]
	the eigenvalue decomposition of $A$, with $\lambda_1<\lambda_2<\cdots<\lambda_N$. The eigenvalue decomposition of $A'$ is 
	\[
		\sum_{i=1}^N \frac{\upw+\lambda_i}{2\upw}\ket{u_i}\bra{u_i}.
	\]
	We have  
	\[
		\bra{\psi}P(A')\ket{\psi}=\sum_{i=1}^N P\left(\frac{\upw+\lambda_i}{2\upw}\right)\abs{\braket{u_i}{\psi}}^2.
	\]
	We consider two cases. 
	\begin{itemize}
		\item[(a)] If $\en{A}\le (t-1)\frac{\varepsilon}{2}\upw-\upw$ then for any $i\in [N]$ such that $\lambda_i\le \en{A}+\frac{\varepsilon}{2}\upw$ we have 
		\[
			\frac{\upw+\lambda_i}{2\upw}\le \tau.
		\]
		From \cref{eq:P}, we thus obtain
		\[
			\bra{\psi} P(A')\ket{\psi}
			\ge 
			(1-\xi)\chi^2
			= 
			\left(1-\frac{\chi^2}{12}\right)\chi^2
			\ge\frac{11\chi^2}{12}
		\]
		and thus
		$
			\abs{\hat E}\ge \frac{11\chi^2}{12}-\frac{\chi^2}{4} \ge \frac{2\chi^2}{3}.
		$
		$\Test(t)$ thus outputs ``yes''.
		\item[(b)] If $\en{A} \ge (t+1)\frac{\varepsilon}{2}\upw-\upw$ then for all $i\in [N]$ we have 
		\[
			\frac{\upw+\lambda_i}{2\upw}\ge \tau+\theta.
		\]
		From \cref{eq:P}, we thus obtain
		\[
			\bra{\psi} P(A')\ket{\psi}
			\le\sum_{i=1}^N \xi \abs{\braket{u_i}{\psi}}^2
			\le \xi=\frac{\chi^2}{12}\,,
		\]
		which gives
			$\abs{\hat E}\le \frac{\chi^2}{12}+\frac{\chi^2}{4} = \frac{\chi^2}{3}$.
		$\Test(t)$ thus outputs ``no''. 
	\end{itemize}

	We are now ready to prove the claim.
	Assume that $\abs{E^\ast-\en{A}}>\varepsilon$. There are two cases to consider:
	\begin{itemize}
		\item If $E^\ast-\en{A}>\varepsilon\,\upw$, then 
		\[
			\en{A}<t^\ast\frac{\varepsilon}{2}\upw-\upw-\varepsilon\,\upw
		=(t^\ast-2)\frac{\varepsilon}{2}\upw-\upw.
		\]
		Note that since $\en{A}\in[-\upw,\upw]$, this can happen only if $t^\ast> 1$.
		From Case (a) of the above argument, $\Test(t^\ast-1)$ should output ``yes'', which contradicts the definition of~$t^\ast$.
		\item If $\en{A}-E^\ast>\varepsilon\, \upw$, then 
		\[
			\en{A}>t^\ast\frac{\varepsilon}{2}\upw+\varepsilon\,\upw-\upw=(t^\ast+2)\frac{\varepsilon}{2}\upw-\upw>(t^\ast+1)\frac{\varepsilon}{2}\upw-\upw.
		\]
		From  Case (b) of the above argument, $\Test(t^\ast)$ should thus output ``no'', which contradicts the definition of~$t^\ast$.
	\end{itemize}
	Since we get a contradiction in both cases, we conclude that $\abs{E^\ast-\en{A}}\le\varepsilon\,\upw$.
\end{proof}

\subsection{Consequences: Theorems \ref{th:main-informal} and \ref{cor:main-informal}}\label{sub:EV2}
We are now ready to give the full statements of Theorems \ref{th:main-informal} and \ref{cor:main-informal} and prove them.
\addtocounter{theorem}{-3}
\begin{theorem}[Full version]
	Consider any $\varepsilon,\chi\in(0,1]\cap\Real[\poly(n)]$.
	For any $k$-local Hamiltonian $H$ on $n$ qubits, given sample-and-query access to a quantum state $\ket{\psi}$ with 
	\[
		\overlapf{H,\ket{\psi}}{\frac{\varepsilon}{2}\kappa}\ge\chi,
	\] 
	where $\upw=\sum_{i=1}^m\norm{H_i}$,
	there is a classical algorithm that computes in $\poly\big(\frac{1}{\chi^{k/\varepsilon}},n\big)$ time and $\myO{\frac{1}{\varepsilon^2}}$ space an estimate $\hat E$ such that 
	\[
		\abs{\hat E-\en{H}}\le \varepsilon\sum_{i=1}^m\norm{H_i}
	\] 
	holds with probability at least $1-1/\exp(n)$.
\end{theorem}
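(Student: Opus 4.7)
The plan is to derive this full version as an immediate corollary of Theorem~\ref{th:EV}, by exhibiting the canonical decomposition $H=\sum_{i=1}^m H_i$ as an $(s,\upw)$-decomposition in the sense of Definition~\ref{def:decomp} with $s=2^k$ and $\upw=\sum_{i=1}^m\|H_i\|$, and then instantiating the complexity bound of Theorem~\ref{th:EV} with these parameters.

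First I would verify the sparsity claim. Each term $H_i$ acts non-trivially only on a set $S_i$ of at most $k$ qubits, so for any $x\in\{0,1\}^n$, the nonzero entries of the $x$-th row of $H_i\in\Comp^{2^n\times 2^n}$ correspond to those $y\in\{0,1\}^n$ that agree with $x$ outside $S_i$ (and are arbitrary on $S_i$); there are at most $2^k$ such $y$, so each $H_i$ is $2^k$-sparse. Since the input already supplies the values $\|H_i\|$ (as noted in the discussion of the input format in Section~\ref{sec:prelim}), setting $\upw_i=\|H_i\|$ and $\upw=\sum_{i=1}^m\|H_i\|$ turns $H=\sum_{i=1}^m H_i$ into a bona fide $(s,\upw)$-decomposition with $s=2^k\ge 2$. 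Query access to each $H_i$ in the sense of Definition~\ref{def:ma} is readily implemented from the $2^k\times 2^k$ matrix describing the nontrivial action of $H_i$ together with the index set $S_i$: any entry, the row-sparsity count, and the list of nonzero positions in a given row can all be computed in $\poly(n)$ time by unpacking the tensor-product structure.

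Next I would invoke Theorem~\ref{th:EV} on this $(s,\upw)$-decomposition. The hypothesis $\overlapf{H,\ket{\psi}}{(\varepsilon/2)\upw}\ge\chi$ of the theorem to be proved is precisely the input condition of $\EV(s,\chi,\varepsilon)$, so the algorithm of Theorem~\ref{th:EV} returns, with probability at least $1-1/\exp(n)$, an estimate $\hat E$ satisfying $|\hat E-\en{H}|\le\varepsilon\,\upw=\varepsilon\sum_{i=1}^m\|H_i\|$, which is exactly the required accuracy. Note that $H$ is Hermitian, hence normal, as required.

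Finally, I would simplify the resulting complexities. The running time of Theorem~\ref{th:EV} with $s=2^k$ is $\myO{s^{c'\log(1/\chi)/\varepsilon}}=\myO{2^{kc'\log(1/\chi)/\varepsilon}}=\myO{(1/\chi)^{kc'/\varepsilon}}$, which is $\poly(1/\chi^{k/\varepsilon},n)$ because $c'$ is a universal constant and the $\myO{\cdot}$ notation already absorbs $\poly(n)$ factors; the space bound $\myO{1/\varepsilon^2}$ carries over unchanged. The main content of the proof is already encapsulated in Theorem~\ref{th:EV}, so there is no real obstacle: the only work is the routine bookkeeping above to verify that a $k$-local operator on $n$ qubits, when realized as a $2^n\times 2^n$ matrix, fits the $(s,\upw)$-decomposition framework with $s=2^k$.
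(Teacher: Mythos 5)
Your proposal is correct and follows exactly the paper's route: the paper's proof is precisely the one-line application of Theorem~\ref{th:EV} with $A=H$, $s=2^k$ and $\upw=\sum_{i=1}^m\norm{H_i}$, and your additional verifications (sparsity $2^k$, query access to each $H_i$, and the simplification of $s^{c'\log(1/\chi)/\varepsilon}$ to $\poly(1/\chi^{k/\varepsilon},n)$) are the routine details the paper leaves implicit.
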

\addtocounter{theorem}{+3}
\begin{proof}[Proof of \cref{th:main-informal}]
	We apply \cref{th:EV} with $A=H$, $s=2^k$ and $\kappa=\sum_{i=1}^m\norm{H_i}$.
\end{proof}
\addtocounter{theorem}{-3}
\begin{theorem}[Full version]
	Consider any $\varepsilon\in(0,1]\cap\Real[\poly(n)]$.
	For any $k$-local Hamiltonian $H$ on $n$ qubits, there is a classical algorithm that computes in $2^{O(kn/\varepsilon)}$ time and $\myO{\frac{1}{\varepsilon^2}}$ space an estimate $\hat E$ such that 
	\[
		\abs{\hat E-\en{H}}\le \varepsilon\sum_{i=1}^m\norm{H_i}
	\] 
	holds with probability at least $1-1/\exp(n)$.
\end{theorem}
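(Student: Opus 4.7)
The plan is to reduce the no-guiding-state case to the guided version (\cref{th:main-informal}) by using each computational basis state $\ket{j}$ as a candidate guiding state. Since $\sum_{j\in[N]}\abs{\braket{j}{u_1}}^2=1$ for any unit ground-state eigenvector $\ket{u_1}$, there exists at least one basis state $\ket{j^\ast}$ with $\abs{\braket{j^\ast}{u_1}}\ge 1/\sqrt{N}=2^{-n/2}$, and therefore the generalized overlap satisfies $\overlapf{H,\ket{j^\ast}}{\frac{\varepsilon}{2}\kappa}\ge 2^{-n/2}$, where $\kappa=\sum_{i=1}^m\norm{H_i}$. We do not know which $j^\ast$ is good, but a $2^n$ outer loop comfortably fits within a $2^{O(kn/\varepsilon)}$ time budget, so we can simply try all of them.

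Concretely, for each $j\in[N]$ we have trivial sample-and-query access to $\ket{j}$: sampling always outputs $j$, and queries return $1$ at coordinate $j$ and $0$ elsewhere. We invoke the algorithm of \cref{th:main-informal} with $\ket{\psi}=\ket{j}$ and $\chi=2^{-n/2}$, setting the internal failure probability to $1/\exp(2n)$ so that a union bound over the $2^n$ invocations still yields overall failure probability at most $1/\exp(n)$. Each invocation costs $\poly(2^{kn/(2\varepsilon)},n)=2^{O(kn/\varepsilon)}$ time and $\myO{1/\varepsilon^2}$ space; multiplying by the $2^n$ outer iterations keeps the time at $2^{O(kn/\varepsilon)}$. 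Let $\hat E_j$ denote the resulting estimate (treating any nonterminating invocation as $+\infty$). The output of the algorithm is $\hat E=\min_{j\in[N]}\hat E_j$.

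Correctness hinges on a \emph{one-sided} analysis of the $\Test$ subroutine from the proof of \cref{th:EV}. In case (b) of that analysis, the bound $\bra{\psi}P(A')\ket{\psi}\le\xi$ holds for any unit vector $\ket{\psi}$, because $\abs{P(x)}\le\xi$ uniformly on the high-energy part of the spectrum; this step does not invoke the overlap assumption. Consequently, regardless of whether $\ket{j}$ has any overlap with the ground state, $\Test(t)$ outputs ``no'' whenever $\en{H}>(t+1)\frac{\varepsilon}{2}\kappa-\kappa$, so the smallest $t^\ast$ at which it outputs ``yes'' satisfies $\en{H}<(t^\ast+1)\frac{\varepsilon}{2}\kappa-\kappa$, and hence the returned value $\hat E_j=t^\ast\frac{\varepsilon}{2}\kappa-\kappa$ never undershoots $\en{H}$ by more than $\varepsilon\kappa$. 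For the good basis state $j^\ast$, the overlap hypothesis of \cref{th:main-informal} is met and we obtain the full two-sided guarantee $\abs{\hat E_{j^\ast}-\en{H}}\le\varepsilon\kappa$. Taking the minimum over $j$ preserves both properties, yielding $\abs{\hat E-\en{H}}\le\varepsilon\kappa$ with probability at least $1-1/\exp(n)$.

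The main obstacle is to rigorously extract this one-sided behavior from the proof of \cref{th:EV} in the absence of the overlap promise, and to ensure that the good basis state $j^\ast$ indeed triggers a ``yes'' at some $t\in\{0,\ldots,T-1\}$ (so that $\hat E_{j^\ast}<+\infty$); both points follow from the existing case analysis, possibly after slightly enlarging the scanned range of $t$ to absorb boundary effects. The space complexity is dominated by a single invocation of \cref{th:main-informal} plus an $O(n)$-bit counter for the outer loop, giving $\myO{1/\varepsilon^2}$ in total.
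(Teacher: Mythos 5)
Your proposal is correct in substance, but it takes a genuinely different route from the paper. The paper obtains \cref{cor:main-informal} as a strictly black-box application of \cref{th:main-informal}: it passes to the Hamiltonian $H\otimes I$ on $2n$ qubits and uses the maximally entangled state $\frac{1}{\sqrt{2^n}}\sum_i\ket{i}\ket{i}$ as the guiding state, which provably has overlap $2^{-n/2}$ with the (enlarged) ground space, so a single invocation with $\chi=2^{-n/2}$ suffices and no union bound or minimum over candidates is needed. You instead enumerate all $2^n$ computational basis states as candidate guiding states and return the minimum estimate; this cannot be justified from the statement of \cref{th:main-informal} alone, since for a bad guiding state the theorem gives no guarantee, and you correctly compensate with a white-box argument: case (b) of the $\Test$ analysis inside the proof of \cref{th:EV} is overlap-free (the bound $\bra{\psi}P(A')\ket{\psi}\le\xi$ holds for every unit vector, and the estimator of \cref{th:PT} needs no overlap), so no $\hat E_j$ undershoots $\en{H}$ by more than $\varepsilon\kappa$, while the good basis state (which exists by pigeonhole) supplies the two-sided bound. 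The boundary issue you flag (whether $t^\ast$ exists when $\en{H}$ is near $\kappa$, fixable by defaulting to the top of the scanned range) is also implicitly present in the paper's own proof of \cref{th:EV}, so it is not a defect specific to your argument. In short: the paper's reduction is cleaner and modular (one guiding state, one call), whereas yours is more elementary in that it avoids the tensor-product/maximally-entangled-state trick, at the price of opening up the proof of \cref{th:EV} to extract the one-sided robustness property and of an extra $2^n$ outer loop and union bound, both of which are harmless for the claimed $2^{O(kn/\varepsilon)}$ time and $\myO{1/\varepsilon^2}$ space bounds.
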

\addtocounter{theorem}{+3}
\begin{proof}[Proof of \cref{cor:main-informal}]
	Let us write
	\[
		H=\sum_{i=1}^{2^n}\lambda_i\ket{u_i}\bra{u_i}
	\]
	the spectral decomposition of $H$, with $\lambda_1\le\lambda_2\le\cdots\le\lambda_{2^n}$ and corresponding orthonormal eigenvectors $\ket{u_1},\ldots,\ket{u_{2^n}}$,
	where $\lambda_1=\en{H}$.
	We apply \cref{th:main-informal} with the Hamiltonian $H'=H\otimes I$ acting on $2n$ qubits (here $I$ is the identity matrix acting on $n$ qubits) and guiding state
	\[
		\ket{\Phi}=\frac{1}{\sqrt{2^n}}\sum_{i=1}^{2^n}\ket{i}\ket{i},
	\]
	for which it is trivial to implement sample-and-query access. This is a maximally entangled state, which can also be written as 
	\[
		\ket{\Phi}=\frac{1}{\sqrt{2^n}}\sum_{i=1}^{2^n}\ket{u_i}\ket{v_i},
	\]
	for another orthonormal basis $\{\ket{v_1},\ldots,\ket{v_{2^n}}\}$.

	Let $t\in[2^n]$ denote the multiplicity of the ground state energy of $H$. The eigenspace corresponding to the ground state energy of $H'$ is thus
	$\spa\left\{\ket{u_i}\ket{j}\:|\:i\in[t],j\in[2^n]\right\}$.
	We have 
	\[
		\overlapf{H,\ket{\Phi}}{0}=\sqrt{\sum_{i=1}^{t}\sum_{j=1}^{2^n}\abs{(\bra{u_i}\bra{j})\ket{\Phi}}^2}\ge\frac{1}{\sqrt{2^{n}}}\sqrt{\sum_{j=1}^{2^n}\abs{\braket{j}{v_1}}^2}=\frac{1}{\sqrt{2^{n}}}.
	\] 

	The conclusion follows from \cref{th:main-informal} with $\chi=2^{-n/2}$.
\end{proof}

 \section*{Acknowledgments}
 The author is supported by JSPS KAKENHI grant No.~24H00071, MEXT Q-LEAP grant No.~JPMXS0120319794, JST ASPIRE grant No.~JPMJAP2302 and JST CREST grant No.~JPMJCR24I4. 

\printbibliography

\appendix
\section{Proof of \cref{lemma:Tang}}\label{app}

	Let $p$ be the probability distribution from \cref{def:samplable}. Consider the following procedure.
	\begin{center}
		\begin{minipage}{14.2 cm} \vspace{2mm}
		\begin{algorithm}[H]
			\SetAlgorithmName{}{}{List of Algorithms}
			\nonl \hspace{-4mm}Procedure $\Ss$ 
	
			Sample one index $j\in\{1,\ldots,N\}$ according to the probability distribution $p$\,;
	
			Query $\braket{j}{\psi}$ and ${w}_j$\,;

			{\bf return} $\frac{ w_j}{\braket{j}{\psi}}$\,;
			\end{algorithm}
		\end{minipage}
	\end{center}
	Let $X$ denote the complex random variable corresponding to the output of this procedure.
	We calculate the expectation and variance of $X$:
	\begin{align*}
	\ex{X}
	&=
	\sum_{j=1}^N p(j)\frac{ w_j}{\braket{j}{\psi}}
	=
	\sum_{j=1}^N \abs{\braket{j}{\psi}}^2\frac{ w_j}{\braket{j}{\psi}}
	=
	\sum_{j=1}^N \braket{\psi}{j}w_j
	=
	\braket{\psi}{w}\\
	\var{X}
	&\le
	\ex{\abs{X}^2}
	=
	\sum_{j=1}^N 
	p (j)\abs{\frac{ w_j}{\braket{j}{\psi}}}^2
	=
	\norm{w}^2\,.
	\end{align*}
	Step 1 requires $\poly(n)$ time and space. Step 2 requires time $\qt{w}$ and space $\qs{w}$. At Step 3 we need to implement the division of an integer with binary expansion of length $\len{w}\le \qt{w}$ by an integer with binary expansion of length $\poly(n)$. 
	One execution of Procedure $\Ss$ can thus ne implemented in time $\myO{\qt{w}}$ and space $\qs{w}+\len{w}+\poly(n)$.\footnote{Note that the $\len{w}+\poly(n)$ part 
	can be exponentially improved by using space-efficient division \cite{HesseICALP01}. Hereafter we do not try to make such improvements since they will be negligible when considering our applications of \cref{lemma:Tang}.}
	
	We apply $t$ times Procedure $\Ss$, for some integer $t$ to be set later, each time getting a complex number $X_i$, and output the mean. Let \[Y=\frac{X_1+\cdots+X_t}{t}\] denote the corresponding complex random variable. Since the variables $X_1$, $\ldots$, $X_t$ are independent, we have
	\begin{align*}
	\ex{Y}&=\ex{X},\\
	\var{Y}&=\frac{1}{t^2}\left(\var{X_1}+\cdots+\var{X_t}\right) = \frac{\var{X}}{t}\le \frac{\norm{w}^2}{t}.
	\end{align*}
	By Chebyshev's inequality we obtain
	\[
	\Pr\left[\abs{Y-\ex{Y}}> \frac{\varepsilon}{\sqrt{2}}\norm{w}\right]
	\le 
	\frac{2\var{Y}}{\varepsilon^2\norm{w}^2}
	\le
	\frac{2}{\varepsilon^2 t}.
	\]
	Taking $t=\Theta(1/\varepsilon^2)$ guarantees that the above probability is at most $1/4$. The overall time and space complexities are 
	\[
		\myO{\qt{w}\,\varepsilon^{-2}}
	\]
	and 
	\[
		\qs{w}+\len{w}+\poly(n)+O(\log(1/\varepsilon))
	\,,\] 
	respectively.
	
	Using Lemma~\ref{lemma:powering}, by repeating this process $O(\log (1/\delta))$ times, we can compute an estimate $a$ such that  
	\[
		\Pr\left[\abs{a-\ex{Y}}> \varepsilon\norm{w}\right]
		\le 
		\delta.
	\]
	Lemma \ref{lemma:powering} introduces a $\log(1/\delta)$ factor in the time complexity, giving overall time complexity 
	\[
		\myO{\qt{w}\,\varepsilon^{-2}\log (1/\delta)}.
	\] 
	For the computation of the medians in Lemma \ref{lemma:powering}, we need to store $O(\log(1/\delta))$ values, each requiring $\len{w}+\poly(n)+O(\log(1/\varepsilon))$ bits.
	The overall space complexity is thus
	\[
		\qs{w}+\myO{\Big(\len{w}+\log(1/\varepsilon)\Big)\log(1/\delta)}.
	\]
	
\end{document}